\newcommand{\Sp}{\mathbb{S}}
\newcommand{\Spd}{\Sp^{p-1}}
\newcommand{\reals}{\mathbb{R}}
\newcommand{\x}{\bm{x}}
\newcommand{\m}{\bm{\mu}}
\newcommand{\kummer}{M}
\newcommand{\hyperpq}[2]{\thinspace{}_{{#1}}F_{{#2}}}
\newcommand{\risingF}[2]{{{#1}}^{\overline{{#2}}}}
\newcommand{\half}{\tfrac{1}{2}}
\newcommand{\khat}{\hat{\kappa}}
\newcommand{\set}[1]{\left\{{#1}\right\}}
\newcommand{\X}{{\cal X}}
\newcommand{\ratio}{\frac{M'(a, c; \kappa)}{M(a, c; \kappa)}}
\newcommand{\mh}{\hat{\m}}
\newcommand{\proj}{\mathbb{P}^{p-1}}
\newcommand{\fmu}{f_\mu}
\newcommand{\nlsum}{\sum\nolimits}
\numberwithin{equation}{section}
\DeclareMathOperator*{\argmax}{argmax}
\newtheorem{theorem}{Theorem}[section]
\newtheorem{lemma}[theorem]{Lemma}
\newcommand{\email}[1]{\texttt{#1}}
\begin{document}
\title{The Multivariate Watson Distribution: Maximum-Likelihood Estimation and other Aspects}
\author{\begin{tabular}{ccc}
    Suvrit Sra\thanks{This work was largely done when the first author was affiliated with the Max Planck Institute for Biological Cybernetics} &\hskip 12pt &Dmitrii Karp\\
    Max Planck Institute for Intelligent Systems & & Institute of Applied Mathematics\\
    T\"ubingen, Germany & & Vladivostok, Russian Federation\\
    \email{suvrit@tuebingen.mpg.de} & & \email{dimkrp@gmail.com}
  \end{tabular}}
\date{}
\maketitle

\begin{abstract}
  This paper studies fundamental aspects of modelling data using
  multivariate Watson distributions. Although these distributions
  are natural for modelling axially symmetric data (i.e., unit vectors
  where $\pm \x$ are equivalent),  for high-dimensions using them can
  be difficult---largely because for  Watson distributions even basic
  tasks such as maximum-likelihood are  numerically challenging. To tackle
  the numerical difficulties some approximations have been derived. But
  these are either grossly inaccurate in high-dimensions
  (\emph{Directional Statistics}, Mardia \& Jupp. 2000)
  or  when reasonably accurate (\emph{J. Machine Learning Research,
  W.\& C.P., v2},  Bijral \emph{et al.}, 2007, pp. 35--42), they lack
  theoretical justification.  We derive asymptotically precise two-sided
  bounds for the maximum-likelihood estimates which lead to new approximations.
  Our approximations are theoretically  well-defined, numerically accurate,
  and easy to compute. We build  on our parameter estimation and discuss
  mixture-modelling with Watson distributions; here we uncover a hitherto  unknown connection
  to the ``diametrical clustering'' algorithm of Dhillon \emph{et al.}
  (\emph{Bioinformatics}, 19(13), 2003, pp. 1612--1619).
\end{abstract}
\begin{flushleft}
  \textbf{Keywords:}   Watson distribution, Kummer function, Confluent
  hypergeometric function, Directional statistics, Diametrical clustering,
  Special function, Hypergeometric identity
\end{flushleft}

\section{Introduction}
Life on the surface of the unit hypersphere is more twisted than
you might imagine: designing elegant probabilistic models is easy
but using them is often not. This difficulty usually stems from
the complicated normalising constants associated with directional
distributions. Nevertheless, owing to their powerful modelling
capabilities, distributions on hyperspheres continue finding
numerous applications---see e.g., the excellent book
\emph{Directional Statistics}~\citep{maju00}.

A fundamental directional distribution is the von Mises-Fisher
(vMF) distribution, which models data concentrated around a
mean-direction. But for data that have additional structure, vMF
can be inappropriate: in particular, for axially symmetric data it
is more natural to prefer the (Dimroth-Scheidegger)-Watson
distribution~\citep{maju00,watson65}. And this distribution is the
focus of our paper.

Three main reasons motivate our study of the multivariate Watson
(mW) distribution, namely:
\begin{inparaenum}[(i)]
\item it is fundamental to directional statistics;
\item it has not received much attention in modern data-analysis setups involving high-dimensional data; and
\item it provides a theoretical basis to ``diametrical clustering'', a
  procedure developed for gene-expression analysis~\citep{diametric}.
\end{inparaenum}

Somewhat surprisingly, for high-dimensional settings, the mW
distribution seems to be fairly under-studied. One reason might be
that the traditional domains of directional statistics are
low-dimensional, e.g., circles or spheres. Moreover, in
low-dimensions numerical difficulties that are rife in
high-dimensions are not so pronounced. This paper contributes
theoretically and numerically to the study of the mW distribution.
We hope that these contributions and the connections we make to
established applications help promote wider use of the mW
distribution.

\subsection{Related Work}
Beyond their use in typical applications of directional
statistics~\citep{maju00}, directional distributions gained
renewed attention in data-mining, where the vMF distribution was
first used by~\citep{bdgs03:kdd,suv.vmf}, who also derived some
\emph{ad-hoc} parameter estimates; Non \emph{ad-hoc} parameter
estimates for the vMF case were obtained by~\citet{tanabe}.

More recently, the Watson distribution was considered
in~\citep{avleen} and also in~\citep{suv.phd}. \citet{avleen} used
an approach similar to that of~\citep{bdgs03:kdd} to obtain an
\emph{ad-hoc} approximation to the maximum-likelihood estimates.
We eliminate the \emph{ad-hoc} approach and formally derive tight,
two-sided bounds which lead to parameter approximations that are
accurate and efficiently computed.

Our derivations are based on carefully exploiting properties
(several \emph{new} ones are derived in this paper) of the
confluent hypergeometric function, which arises as a part of the
normalisation constant. Consequently, a large body of classical
work on special functions is related to our paper. But to avoid
detracting from the main message limitations, we relegate highly
technical details to the appendix.

Another line of related work is based on mixture-modelling with
directional distributions, especially for high-dimensional
datasets. In~\citep{suv.vmf}, mixture-modelling using the
Expectation Maximisation (EM) algorithm for mixtures of vMFs was
related to cosine-similarity based K-means clustering.
Specifically, \citet{suv.vmf} showed how the cosine based K-means
algorithm may be viewed as a limiting case of the EM algorithm for
mixtures of vMFs. Similarly, we investigate mixture-modelling
using Watson distributions, and connect a limiting case of the
corresponding EM procedure to a clustering algorithm called
``diametrical clustering''~\citep{diametric}. Our viewpoint
provides a new interpretation of the (discriminative) diametrical
clustering algorithm and also lends generative semantics to it.
Consequently, using a mixture of Watson distributions we also
obtain a clustering procedure that can provide better clustering
results than plain diametrical clustering alone.

\vspace*{-12pt}
\section{Background}
Let $\Spd = \{\bm{x} \mid \bm{x} \in \reals^p,\ \|\bm{x}\|_2 =
1\}$ be the $(p{-}1)$-dimensional unit hypersphere centred at the
origin.  We focus on axially symmetric vectors, i.e., $\pm\x \in
\Spd$ are equivalent; this is also denoted by $\x \in \proj$,
where $\proj$ is the projective hyperplane of dimension $p{-}1$. A
natural choice for modelling such data is the multivariate Watson
distribution~\citep{maju00}. This distribution is parametrised by
a \emph{mean-direction} $\bm\mu \in \proj$, and a
\emph{concentration} parameter $\kappa \in \reals$; its
probability density function is
\begin{equation}
  \label{eq:60}
  W_p(\x; \m, \kappa) = c_p(\kappa) e^{\kappa (\bm{\mu}^T\bm{x})^2},\qquad \x \in \proj.
\end{equation}
The normalisation constant $c_p(\kappa)$ in~\eqref{eq:60} is given
by
\begin{equation}
  \label{eq:one}
  c_p(\kappa) = \frac{\Gamma(p/2)}{2\pi^{p/2}\kummer(\half, \tfrac{p}{2}, \kappa)},
\end{equation}
where $\kummer$ is Kummer's confluent hypergeometric function
defined as (\citep[formula~6.1(1)]{htf} or
\cite[formula~(2.1.2)]{AAR}
\begin{equation}
  \label{eq:two}
  \kummer(a, c, \kappa) = \sum_{j \geq 0} \frac{\risingF{a}{j}}{\risingF{c}{j}} \frac{\kappa^j}{j!},\qquad a, c, \kappa \in \reals,
\end{equation}
and $\risingF{a}{0}=1$, $\risingF{a}{j}=a(a+1)\cdots(a+j-1)$,
$j\geq{1}$, denotes the \emph{rising-factorial}.

Observe that for $\kappa > 0$, the density concentrates around
$\bm{\mu}$ as $\kappa$ increases, whereas for $\kappa < 0$, it
concentrates around the great circle orthogonal to $\bm{\mu}$.
Observe that $(\bm{Q\mu})^T\bm{Qx} = \bm{\mu}^T\bm{x}$ for any
orthogonal matrix $\bm{Q}$. In particular for $\bm{Q\mu}=\m$,
$\m^T(\bm{Qx})=\m^T\bm{x}$; thus, the Watson density is
rotationally symmetric about $\bm{\mu}$.

\subsection{Maximum Likelihood Estimation}
\label{sec:mle} We now consider the basic and apparently simple
task of maximum-likelihood parameter estimation for mW
distributions: this task turns out to be surprisingly difficult.

Let $\x_1,\ldots,\x_n \in \proj$ be i.i.d.\ points drawn from
$W_p(\x;\m,\kappa)$, the Watson density with mean $\m$ and
concentration $\kappa$. The corresponding log-likelihood is
\begin{equation}
  \label{eq:3}
  \ell(\m,\kappa; \x_1,\ldots,\x_n) = n\bigl(\kappa\m^T\bm{S}\m -
  \log\kummer(1/2, p/2, \kappa) + \gamma\bigr),
\end{equation}
where $\bm{S} = n^{-1}\sum_{i=1}^n \x_i\x_i^T$ is the sample
\emph{scatter
  matrix}, and $\gamma$ is a constant term that we can ignore. Maximising~\eqref{eq:3} leads to the following parameter estimates~\citep[Sec.~10.3.2]{maju00} for the mean vector
\begin{equation}
  \label{four}
  \hat{\m} = \bm{s}_1\quad\text{if}\quad \khat > 0,\qquad \mh = \bm{s}_p\quad\text{if}\quad \khat < 0,
\end{equation}
where $\bm{s}_1,\ldots,\bm{s}_p$ are normalised eigenvectors ($\in
\proj$) of the scatter matrix $\bm S$ corresponding to the
eigenvalues $\lambda_1 \ge \lambda_2 \ge \cdots \ge \lambda_p$.
The concentration estimate $\khat$ is obtained by
solving\footnote{More precisely, we need $\lambda_1 > \lambda_2$
to ensure a unique m.l.e.\ for positive $\kappa$, and
$\lambda_{p-1} > \lambda_p$, for negative $\kappa$.}
\begin{equation}
  \label{eq:five}
  g(\half,\tfrac{p}{2}; \khat) := \frac{\kummer'(\half, \tfrac{p}{2}, \khat)}
  {\kummer(\half, \tfrac{p}{2}, \khat)}\ \ =\ \ \mh^T\bm{S}\mh\ :=\
  r\qquad(0\le r\le 1),
\end{equation}
where $M'$ denotes the derivative with respect to $\khat$. Notice
that~\eqref{four} and~\eqref{eq:five} are coupled---so we need
some way to decide whether to solve $g(1/2,p/2;\khat) = \lambda_1$
or to solve $g(1/2, p/2;\khat) =\lambda_p$ instead. An easy choice
is to solve both equations, and select the solution that yields a
higher log-likelihood. Solving these equations is much harder.

One could solve~\eqref{eq:five} using a root-finding method
(e.g.~Newton-Raphson). But, the situation is not that simple. For
reasons that will soon become clear, an out-of-the-box
root-finding approach can be unduly slow or even fraught with
numerical peril, effects that become more pronounced with
increasing data dimensionality. Let us, therefore, consider a
slightly more general equation (we also drop the accent on
$\kappa$):
\begin{center}
  \fbox{
    \begin{minipage}[h]{0.8\linewidth}
      \begin{center}
        \textbf{Solve for $\kappa$}
      \end{center}
      \begin{equation}
        \label{eq:61}
        \begin{split}
        \quad g(a, c; \kappa) := &\ratio = r\\
        c > a > 0,&\quad 0 \leq r \leq 1.
        \end{split}
      \end{equation}
    \end{minipage}
  }
\end{center}

\section{Solving for $\kappa$}
\label{sec:analysis} In this section we present two different
solutions to~\eqref{eq:61}. The first is the ``obvious'' method
based on a Newton-Raphson root-finder. The second method is the
key numerical contribution of this paper: a method that computes a
closed-form approximate solution to~\eqref{eq:61}, thereby
requiring merely a few floating-point operations!

\subsection{Newton-Raphson}
Although we establish this fact not until
Section~\ref{sec:direct}, suppose for the moment
that~\eqref{eq:61} \emph{does} have a solution.  Further, assume
that by bisection or otherwise, we have bracketed the root
$\kappa$ to be within an interval and are thus ready to invoke the
Newton-Raphson method.

Starting at $\kappa_0$, Newton-Raphson solves the equation $g(a,c;
\kappa)-r=0$ by iterating
\begin{equation}
  \label{eq:33}
  \kappa_{n+1} = \kappa_n - \frac{g(a, c; \kappa_n)-r}{g'(a, c; \kappa_n)},\qquad n = 0,1,\ldots.
\end{equation}
This iteration may be simplified by rewriting $g'(a, c; \kappa)$.
First note that
\begin{equation}
  \label{eq:62}
  g'(a, c; \kappa) = \frac{M''(a, c; \kappa)}{M(a, b; \kappa)} - \biggl(\ratio\biggr)^2,
\end{equation}
then, recall the following two identities
\begin{align}
  \label{eq:40}
  M''(a, c; \kappa)\quad&=\quad\frac{a(a+1)}{c(c+1)}M(a+2,c+2; \kappa);\\
  \label{eq:63}
  M(a+2,c+2;\kappa)\quad&=\quad\frac{(c+1)(-c+\kappa)}{(a+1)\kappa}M(a+1,c+1;\kappa)
  + \frac{(c+1)c}{(a+1)\kappa}M(a,c; \kappa).
\end{align}
Now, use both~\eqref{eq:40} and~\eqref{eq:63} to rewrite the
derivative~\eqref{eq:62} as
\begin{equation}
 \label{eq:41}
   g'(a, c; \kappa) = (1-c/\kappa)g(a, c; \kappa) + (a/\kappa) - (g(a, c;\kappa))^2.
\end{equation}
The main consequence of these simplifications is that
iteration~(\ref{eq:33}) can be implemented with only \emph{one}
evaluation of the ratio $g(a, c; \kappa_n) =
M'(a,c;\kappa_n)/M(a,c;\kappa_n)$. Efficiently computing this
ratio is a \emph{non-trivial} task in itself; an insight into this
difficulty is offered by observations in~\citep{gautschi77,gil}.
In the worst case, one may have to compute the numerator and
denominator separately (using multi-precision floating point
arithmetic), and then divide. Doing so can require several million
extended precision floating point operations, which is very
undesirable.

\subsection{Closed-form Approximation for~\eqref{eq:61}}
\label{sec:direct} We now derive two-sided bounds which will lead
to a closed-form approximation to the solution of~(\ref{eq:61}).
This approximation, while marginally less accurate than the one
via Newton-Raphson, should suffice for most uses. Moreover, it is
incomparably faster to compute as it is in closed-form.

Before proceeding to the details, let us look at a little history.
For 2--3 dimensional data, or under very restrictive assumptions
on $\kappa$ or $r$, some approximations had been previously
obtained~\citep{maju00}. Due to their restrictive assumptions,
these approximations have limited applicability, especially for
high-dimensional data, where these assumptions are often
violated~\citep{suv.vmf}.  Recently~\citet{avleen} followed the
technique of~\cite{suv.vmf} to essentially obtain the
\emph{ad-hoc} approximation (actually particularly for the case
$a=1/2$)
\begin{equation}
  \label{eq:bijral}
  BBG(r) := \frac{cr-a}{r(1-r)} + \frac{r}{2c(1-r)},
\end{equation}
which they observed to be quite accurate. However,
\eqref{eq:bijral} lacks theoretical justification; other
approximations were presented in~\citep{suv.phd}, though again
only \emph{ad-hoc}.

Below we present new approximations for $\kappa$ that are
theoretically well-motivated and also numerically more accurate.
Key to obtaining these approximations are a set of bounds
localizing $\kappa$, and we present these in a series of theorems
below. The proofs are given in the appendix.

\subsubsection{Existence and Uniqueness}
The following theorem shows that the function $g(a, c; \kappa)$ is
strictly increasing.
\begin{theorem}\label{cor:exist}
Let $c>a>0$, and $\kappa \in \reals$. Then the function $\kappa\to
g(a, c; \kappa)$ is monotone
  increasing from $g(a,c;-\infty)=0$ to $g(a,c;\infty)= 1$.
\end{theorem}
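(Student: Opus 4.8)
The plan is to express $g(a,c;\kappa)$ as the mean of an exponentially tilted Beta-type random variable on $(0,1)$; all three assertions then follow from elementary probability. Since $c>a>0$, Kummer's integral representation~\citep{AAR} applies:
\[
M(a,c;\kappa) = \frac{1}{B(a,c-a)}\int_0^1 e^{\kappa t}\, t^{a-1}(1-t)^{c-a-1}\,dt ,
\]
so, up to the positive constant $1/B(a,c-a)$, $M(a,c;\cdot)$ is the moment generating function of the probability measure $\mu_0$ on $(0,1)$ with density proportional to $t^{a-1}(1-t)^{c-a-1}$. Differentiating under the integral sign (legitimate because all $\kappa$-derivatives of the integrand are locally uniformly dominated) gives $M'(a,c;\kappa)=B(a,c-a)^{-1}\int_0^1 t\,e^{\kappa t}t^{a-1}(1-t)^{c-a-1}\,dt$, and similarly for $M''$. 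Hence
\[
g(a,c;\kappa) = \frac{M'(a,c;\kappa)}{M(a,c;\kappa)} = \mathbb{E}_{\mu_\kappa}[t],\qquad
\mu_\kappa(dt) := \frac{e^{\kappa t}\,\mu_0(dt)}{\int_0^1 e^{\kappa s}\,\mu_0(ds)},
\]
the tilted (Esscher) measure, which is again supported in $(0,1)$; in particular $0<g(a,c;\kappa)<1$ for every finite $\kappa$.

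\emph{Monotonicity.} By~\eqref{eq:62}, $g'(a,c;\kappa) = \bigl(M''M-(M')^2\bigr)/M^2$, which in the language above is exactly $\mathbb{E}_{\mu_\kappa}[t^2]-\bigl(\mathbb{E}_{\mu_\kappa}[t]\bigr)^2 = \mathrm{Var}_{\mu_\kappa}[t]$. Since $\mu_\kappa$ has a strictly positive density on all of $(0,1)$ it is non-degenerate, so this variance is $>0$, and $\kappa\mapsto g(a,c;\kappa)$ is strictly increasing on $\reals$. (Equivalently: $\kappa\mapsto\log M(a,c;\kappa)$ is a cumulant generating function, hence strictly convex, and $g=(\log M)'$.)

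\emph{Limits.} As $\kappa\to+\infty$ the weight $e^{\kappa t}$ drives the mass of $\mu_\kappa$ toward the right endpoint $t=1$. Quantitatively, fix $\epsilon\in(0,1)$ and split $\mathbb{E}_{\mu_\kappa}[1-t]$ over $t<1-\epsilon$ and $t\ge 1-\epsilon$, using that $\mu_0$ assigns positive mass to every subinterval of $(0,1)$ (this holds even when $c-a<1$, since $(1-t)^{c-a-1}$ is still integrable near $1$):
\[
0<1-g(a,c;\kappa)=\mathbb{E}_{\mu_\kappa}[1-t]\le \epsilon + \frac{e^{-\kappa\epsilon/2}}{\mu_0\bigl([1-\tfrac{\epsilon}{2},1]\bigr)},\qquad \kappa>0 .
\]
Letting $\kappa\to\infty$ and then $\epsilon\to 0$ yields $g(a,c;+\infty)=1$. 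For $\kappa\to-\infty$ one may repeat this with the endpoints swapped, or simply invoke Kummer's transformation $M(a,c;\kappa)=e^{\kappa}M(c-a,c;-\kappa)$, which on taking logarithms and differentiating gives $g(a,c;\kappa)=1-g(c-a,c;-\kappa)$; since $c>c-a>0$, the case just proved gives $g(a,c;-\infty)=1-g(c-a,c;+\infty)=0$.

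The one genuinely delicate point is the limit step: turning the ``tilted mass concentrates at an endpoint'' heuristic into a clean estimate, and in particular checking that it survives the regime $c-a<1$ (resp.\ $a<1$), where the density of $\mu_0$ is unbounded at the endpoint but remains integrable. Monotonicity, by contrast, is essentially free once the integral representation is in hand. An alternative (but less self-contained) route for all parts is to feed the classical large-$|\kappa|$ asymptotics of $M(a,c;\kappa)$~\citep{AAR} together with the identity $M'(a,c;\kappa)=\tfrac{a}{c}M(a{+}1,c{+}1;\kappa)$ directly into $g=M'/M$; I would nonetheless favor the probabilistic argument above.
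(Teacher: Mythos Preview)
Your argument is correct and considerably more direct than the paper's. The paper derives this theorem as the special case $\mu=1$ of a more general monotonicity result for the ratio $f_\mu(x)=M(a+\mu,c+\mu;x)/M(a,c;x)$; that result is proved by establishing log-convexity of the map $\mu\mapsto\Gamma(a+\mu)\Gamma(c+\mu)^{-1}M(a+\mu,c+\mu;x)$ (and a companion lemma for $x<0$ via Kummer's transformation), from which the inequality $g(a+1,c+1;x)>g(a,c;x)$---equivalent to $g'>0$---is extracted. The limiting values are read off from the classical large-$|x|$ asymptotics of $M$. Your route through Euler's integral turns $g$ into the mean of an Esscher-tilted Beta law and $g'$ into its variance, which is a cleaner and more self-contained proof of \emph{this} statement; the paper's machinery is heavier here because the log-convexity lemmas and the general $f_\mu$ are reused later in the proofs of the two-sided bounds on $\kappa(r)$, so the investment is amortised. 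Your caveat about the endpoints when $a<1$ or $c-a<1$ is already handled by your own estimate, since you only need $\mu_0([1-\epsilon/2,1])>0$, not boundedness of the density.
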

\begin{proof}
 Since $g(a,c;\kappa)=(a/c)f_1(\kappa)$, where $f_{\mu}$ is defined
 in (\ref{eq.1}), this theorem is a direct consequence of
 Theorem~\ref{thm:monotone}.\qedhere
\end{proof}
Hence the equation $g(a, c; \kappa)=r$ has a unique solution for
each $0 < r < 1$. This solution is negative if $0 < r < a/c$ and
positive if $a/c < r < 1$. Let us now localize this solution to a
narrow interval by deriving tight bounds on it.

\subsubsection{Bounds on the solution $\kappa$}
Deriving tight bounds for $\kappa$ is key to obtaining our new
theoretically well-defined numerical approximations; moreover,
these approximations are easy to compute because the bounds are
given in closed form.

\begin{theorem}
  \label{thm:bounds}
  Let the solution to $g(a, c; \kappa)=r$ be denoted by $\kappa(r)$. Consider the following three bounds:
  \begin{align}
    \label{eq:lower}
    &(\text{lower bound})\qquad &L(r)&=\frac{rc-a}{r(1-r)}\left(1+\frac{1-r}{c-a}\right),\\
    \label{eq:mid}
    &(\text{bound})\qquad &B(r)&=\frac{rc-a}{2r(1-r)}\left(1+\sqrt{1+\frac{4(c+1)r(1-r)}{a(c-a)}}\right),\\
    \label{eq:upper}
    &(\text{upper bound})\qquad &U(r)&=\frac{rc-a}{r(1-r)}\left(1+\frac{r}{a}\right).
  \end{align}
  Let $c > a > 0$, and $\kappa(r)$ be the solution~\eqref{eq:61}. Then, we have
  \begin{enumerate}
  \item for $a/c<r<1$,
    \begin{equation}\label{eq:newpos}
      L(r) < \kappa(r) < B(r) < U(r),
    \end{equation}
  \item for $0<r<a/c$,
    \begin{equation}\label{eq:newneg}
      L(r) < B(r) < \kappa(r) < U(r).
    \end{equation}
  \item and if $r=a/c$, then $\kappa(r) = L(a/c)=B(a/c)=U(a/c)=0$.
  \end{enumerate}
  All three bounds ($L$, $B$, and $U$) are also
  asymptotically precise at $r=0$ and $r=1$.
\end{theorem}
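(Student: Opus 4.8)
The plan is to reduce every assertion to an inequality for the single scalar function $\kappa\mapsto g(a,c;\kappa)$ and then invert. By Theorem~\ref{cor:exist} this map is a strictly increasing bijection of $\reals$ onto $(0,1)$, so $L(r)<\kappa(r)$ is equivalent to $g(a,c;L(r))<r$, $\kappa(r)<U(r)$ is equivalent to $g(a,c;U(r))>r$, and the position of $B(r)$ is dictated by the sign of $g(a,c;B(r))-r$. Concretely I would exhibit three elementary comparison functions — an upper envelope $\overline g\ge g$, a lower envelope $\underline g\le g$, and a ``middle'' function $g_B$ that agrees with $g$ at $\kappa=0$ and crosses it there — chosen so that $(\overline g)^{-1}=L$, $(\underline g)^{-1}=U$ and $g_B^{-1}=B$; monotonicity of $g$ then transports $\underline g<g<\overline g$, together with the sign statement for $g-g_B$, into exactly \eqref{eq:newpos}--\eqref{eq:newneg}. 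The ordering $L(r)<B(r)<U(r)$, asserted in all three cases, is afterwards a purely algebraic comparison of the three closed forms, and the degenerate case $r=a/c$ is immediate: $g(a,c;0)=a/c$ forces $\kappa(a/c)=0$, while $L(a/c)=B(a/c)=U(a/c)=0$ by inspection.

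The factor $\tfrac{rc-a}{r(1-r)}$ common to $L$, $B$, $U$ falls out for free from the Riccati identity \eqref{eq:41}, $g'=(1-c/\kappa)g+a/\kappa-g^2$: since $g'>0$ (Theorem~\ref{thm:monotone}), evaluating at $\kappa=\kappa(r)$ with $g=r$ gives $r(1-r)+(a-cr)/\kappa(r)>0$, i.e.\ $\kappa(r)$ lies beyond $\tfrac{rc-a}{r(1-r)}$, away from the origin. Sharpening this crude estimate to $L$, $B$, $U$ requires one further order of information about $g$. For that I would use the Euler integral $M(a,c;\kappa)\propto\int_0^1 e^{\kappa t}\,t^{a-1}(1-t)^{c-a-1}\,dt$ ($c>a>0$), which realizes $g(a,c;\kappa)=\mathbb{E}_\kappa[T]$ and $g'(a,c;\kappa)=\operatorname{Var}_\kappa[T]$ for the exponentially tilted $\mathrm{Beta}(a,c-a)$ variable $T\in[0,1]$; elementary moment inequalities for $[0,1]$-valued variables (comparing $\operatorname{Var}(T)$ with $\mathbb{E}[T]\,\mathbb{E}[1{-}T]$ and with the second moment $\mathbb{E}[T^2]=\tfrac{a(a+1)}{c(c+1)}M(a{+}2,c{+}2;\kappa)/M(a,c;\kappa)$) then produce the one-sided rational estimates on $g$. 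One may equally well stay inside special-function land, combining the contiguous relations \eqref{eq:40}--\eqref{eq:63} with a Tur\'an-type inequality for the shifted family $k\mapsto M(a{+}k,c{+}k;\kappa)$ (here the family $f_\mu$ from Theorem~\ref{thm:monotone} is the natural vehicle); retaining the $M(a{+}2,c{+}2;\cdot)$ term is precisely what makes $B$ the sole \emph{quadratic} bound. I expect the genuine work to be here: choosing these comparison functions so that their inverses are $L$, $B$, $U$ \emph{verbatim}, with the exact constants, and not merely bounds of the right shape.

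The case split $r>a/c$ versus $r<a/c$ (equivalently $\kappa(r)>0$ versus $\kappa(r)<0$) enters only through the sign of $a-cg$ in \eqref{eq:41}, which is negative for $\kappa>0$ and positive for $\kappa<0$; this is exactly the sign that reverses the position of the middle bound $B(r)$ relative to $\kappa(r)$ between \eqref{eq:newpos} and \eqref{eq:newneg}, while $L$ and $U$ keep their roles in both regimes. Finally, for the asymptotic precision at $r=0$ and $r=1$ I would substitute the classical large-$|\kappa|$ behaviour of Kummer's function, $M(a,c;\kappa)\sim\tfrac{\Gamma(c)}{\Gamma(a)}\kappa^{a-c}e^{\kappa}$ as $\kappa\to+\infty$ and $M(a,c;\kappa)\sim\tfrac{\Gamma(c)}{\Gamma(c-a)}(-\kappa)^{-a}$ as $\kappa\to-\infty$, which give $g(a,c;\kappa)=1-(c-a)/\kappa+O(\kappa^{-2})$ near $+\infty$ and $g(a,c;\kappa)=-a/\kappa+O(\kappa^{-2})$ near $-\infty$, hence $\kappa(r)\sim\tfrac{c-a}{1-r}$ as $r\to1$ and $\kappa(r)\sim-\tfrac{a}{r}$ as $r\to0$; expanding $L$, $B$, $U$ about $r=1$ and $r=0$ and comparing against these expansions then establishes the endpoint precision. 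The main obstacle, to reiterate, is not the overall scheme but pinning down the three comparison functions with the precise constants that invert to $L$, $B$ and $U$; everything after that is monotonicity, the Riccati identity \eqref{eq:41}, and routine expansion.
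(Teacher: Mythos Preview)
Your reduction of each inequality to a comparison with $g$ via monotonicity, the crude Riccati estimate $\kappa(r)r(1-r)>rc-a$ (positive case), and the treatment of part~3 and of the asymptotic precision are all fine and match the paper. The genuine gap is the step you yourself flag: producing comparison functions whose inverses are \emph{exactly} $L$, $B$, $U$. The paper does not follow this route, and I do not see how generic moment bounds of the type $\operatorname{Var}(T)\le \mathbb{E}[T]\,\mathbb{E}[1{-}T]$ single out these particular three expressions; note that $L^{-1}$ and $U^{-1}$ are already only implicitly defined by quadratics in $r$, so ``exhibiting'' the envelopes is not a matter of adjusting constants.

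What the paper actually does is different in kind. From the contiguous relation \eqref{eq:56} (shifted by one) it derives the pivot identity
\[
\kappa=\frac{cr-a}{r\,(1-r_1)},\qquad r_1:=g(a{+}1,c{+}1;\kappa),
\]
which localises all the analysis to a comparison between $r$ and the \emph{parameter-shifted} ratio $r_1$. The lower bound $L(r)<\kappa(r)$ then reduces, after algebra, to a single Tur\'an-type inequality for the family $M_k=M(a{+}k,c{+}k;\kappa)$, namely $\tfrac{a+1}{c+2}M_0'M_2'>\tfrac{a}{c+1}(M_1')^2$ with $c'=c{+}1$, which is precisely \eqref{eq:13} from Theorem~\ref{thm:monotone}; Lemma~\ref{lemm:recurrence} is the device that collapses the resulting combination of Kummer products to this form. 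The middle bound $B$ is then obtained by a \emph{bootstrap}: feed the already-proved $L$-bound, with parameters shifted $a\to a{+}1$, $c\to c{+}1$, back into the pivot identity to get a two-sided control on $1-r_1$, substitute, square, and read off a quadratic inequality in $\kappa$ whose extreme root is $B(r)$. The inequality $B(r)\lessgtr U(r)$ is pure algebra. So the three bounds are not three independent envelope comparisons but a single Tur\'an inequality plus a self-improving recursion through the pivot identity; that recursion is the missing idea in your plan.
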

\begin{proof}
  The proofs of parts 1 and~2 are given in Theorems~\ref{thm.poskappa}
  and \ref{thm.negkappa} (see Appendix), respectively. Part~3 is trivial.
  It is easy to see that $\lim_{r \to{0,1}}U(r)/L(r)=1$, so
  from inequalities~\eqref{eq:newpos} and~\eqref{eq:newneg}, it follows
  that
  \begin{equation*}
    \lim\limits_{r\to{0,1}}\frac{L(r)}{\kappa(r)} = \lim\limits_{r\to{0,1}}\frac{B(r)}{\kappa(r)} = \lim\limits_{r\to{0,1}}\frac{U(r)}{\kappa(r)}=1.\qedhere
  \end{equation*}
\end{proof}
More precise asymptotic characterizations of the approximations
$L$, $B$ and $U$ are given in section~\ref{sec.asymp}.

\subsubsection{BBG approximation}
Our bounds above also provide some insight into the previous
heuristically motivated $\kappa$-approximation $BBG(r)$
of~\citet{avleen} given by~(\ref{eq:bijral}). Specifically, we
check whether $BBG(r)$ satisfies the lower and upper bounds from
Theorem~\ref{thm:bounds}.

To see when $BBG(r)$ violates the lower bound, solve $L(r)>BBG(r)$
for $r$ to obtain
$$
\frac{2c^2+a-\sqrt{(2c^2-a)(2c^2-a-8ac)}}{2(2c^2-a+c)}<r<\frac{2c^2+a+\sqrt{(2c^2-a)(2c^2-a-8ac)}}{2(2c^2-a+c)}.
$$
For the Watson case $a=1/2$; this means that $BBG(r)$ violates the
lower bound and \emph{underestimates} the solution for
$r\in(0.11,0.81)$ if $c=5$; for $r\in(0.0528,0.904)$ if $c=10$;
for $r\in(0.00503,0.99)$ if $c=100$; for $r\in(0.00050025,0.999)$
if $c=1000$.  This fact is also reflected in Figure~\ref{fig.one}.

To see when $BBG(r)$ violates the upper bound, solve $BBG(r)>U(r)$
for $r$ to obtain
$$
r<\frac{2ac}{2c^2-a}.
$$
For the Watson case $a=1/2$; this means that $BBG(r)$ violates the
upper bound and \emph{overestimates} the solution for
$r\in(0,0.1)$ if $c=5$; for $r\in(0,0.05)$ if $c=10$; for
$r\in(0,0.005)$ if $c=100$; for $r\in(0,0.0005)$ if $c=1000$.

What do these violations imply? They show that a combination of
$L(r)$ and $U(r)$ is guaranteed to give a better approximation
than $BBG(r)$ for nearly all $r\in(0,1)$ except for a very small
neighbourhood of the point where $BBG(r)$ intersects $\kappa(r)$.

\subsubsection{Asymptotic precision of the approximations}
\label{sec.asymp} Let us now look more precisely at how the
various approximations behave at limiting values of $r$. There are
three points where we can compute asymptotics: $r=0$, $r=a/c$, and
$r=1$. First, we assess how $\kappa(r)$ itself behaves.
\begin{theorem}
  Let $c > a > 0$, $r \in (0,1)$; let $\kappa(r)$ be the solution to $g(a, c; \kappa)=r$. Then,
  \begin{align}
    \label{eq:kappa0}
    \kappa(r) &= -\frac{a}{r}+(c-a-1)+\frac{(c-a-1)(1+a)}{a}r+O(r^2),\quad r\to{0},\\
    \label{eq:kappaac}
    \kappa(r) &= \Bigl(r-\frac{a}{c}\Bigr) \left  \{ \frac{c^2(1+c)}{a(c-a)} + \frac{c^3(1+c)^2(2a-c)}{a^2(c-a)^2(c+2)} \Bigl(r-\frac{a}{c}\Bigr) + O\Bigl(\Bigl(r-\frac{a}{c}\Bigr)^2\Bigr) \right\},\quad r\to{\frac{a}{c}}\\
\label{eq:kappa1}
    \kappa(r) &= \frac{c-a}{1-r} + 1-a + \frac{(a-1)(a-c-1)}{c-a}(1-r) + O((1-r)^2),\quad r\to{1}.
  \end{align}
\end{theorem}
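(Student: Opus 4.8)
The plan is to obtain each of the three asymptotic expansions by inverting a known local expansion of $g(a,c;\kappa)$ near the corresponding limiting value of $\kappa$, namely $\kappa\to-\infty$ (which gives $r\to 0$), $\kappa=0$ (which gives $r=a/c$), and $\kappa\to+\infty$ (which gives $r\to 1$). By Theorem~\ref{cor:exist} the map $\kappa\mapsto r=g(a,c;\kappa)$ is a strictly increasing bijection $\reals\to(0,1)$, so $\kappa(r)$ is well-defined and these three regimes are exactly the ones where an expansion is available.

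First, for the case $r\to a/c$: since $g$ is analytic at $\kappa=0$ with $g(a,c;0)=a/c$, Taylor-expand $g$ about $\kappa=0$ using the series~\eqref{eq:two} for $M$ and the differential relation~\eqref{eq:41} (or direct term-by-term differentiation) to get $r-a/c = g'(0)\kappa + \tfrac12 g''(0)\kappa^2 + O(\kappa^3)$, where $g'(0)=a(c-a)/(c^2(1+c))$ and $g''(0)$ is computed similarly from~\eqref{eq:41}. Then invert this power series: write $\kappa = A(r-a/c) + B(r-a/c)^2 + \cdots$, substitute, and match coefficients. This yields~\eqref{eq:kappaac}, with $A = c^2(1+c)/(a(c-a))$ and $B$ obtained from $-g''(0)/(2g'(0)^3)$; checking that $B$ simplifies to the stated $c^3(1+c)^2(2a-c)/(a^2(c-a)^2(c+2))$ is a routine but slightly tedious algebraic verification.

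For $r\to 1$ (equivalently $\kappa\to+\infty$): I would use the known asymptotic behaviour of Kummer's function for large positive argument, $M(a,c;\kappa)\sim \Gamma(c)/\Gamma(a)\, e^{\kappa}\kappa^{a-c}\bigl(1 + (1-a)(c-a)/\kappa + \cdots\bigr)$ (e.g.\ from \citep{htf} or \citep{AAR}), differentiate to get the analogous expansion of $M'$, and form the ratio to obtain $g(a,c;\kappa) = 1 - (c-a)/\kappa + (c-a)(1-a)/\kappa^2 + \cdots$ — more cleanly, set $1-r = (c-a)/\kappa \cdot (1 + \alpha/\kappa + \cdots)$ and expand. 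Equivalently, one can rewrite~\eqref{eq:41} as $g' = (1-c/\kappa)g + a/\kappa - g^2$ and feed in the ansatz $g = 1 - u_1/\kappa - u_2/\kappa^2 - \cdots$ to solve for the $u_i$ recursively — this avoids quoting the Kummer asymptotics directly and is self-contained. Inverting $1-r$ as a series in $1/\kappa$ then produces~\eqref{eq:kappa1}. The case $r\to 0$ (i.e.\ $\kappa\to-\infty$) is handled the same way but using Kummer's transformation $M(a,c;\kappa)=e^{\kappa}M(c-a,c;-\kappa)$ to reduce to the large-positive-argument asymptotics, or again by plugging the ansatz $g = -u_1/\kappa + u_2/\kappa^2+\cdots$ (note $g\to 0^+$ as $\kappa\to-\infty$) into~\eqref{eq:41} and solving recursively; inverting then gives~\eqref{eq:kappa0}.

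The main obstacle is bookkeeping rather than conceptual: one must carry the expansions of $M$ and $M'$ (or the recursion from~\eqref{eq:41}) to sufficient order — two correction terms beyond the leading one in each regime — and then correctly invert the resulting series, since each target formula has three explicit terms plus an error estimate. Getting the $O(r^2)$, $O((r-a/c)^2)$ and $O((1-r)^2)$ remainders rigorously requires knowing that the Kummer asymptotic series are genuine asymptotic expansions with controlled remainders (standard, and available from the cited references), or, for the self-contained route via~\eqref{eq:41}, an a priori bound showing $\kappa(r)$ has the claimed leading behaviour so that the formal series manipulation is justified; this a priori control is exactly what Theorem~\ref{thm:bounds} supplies, since $L(r)$ and $U(r)$ already pin down the leading terms $-a/r$ and $(c-a)/(1-r)$ and their ratio tends to $1$. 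So the cleanest write-up is: use Theorem~\ref{thm:bounds} for the leading term and to legitimize the expansion, then refine via~\eqref{eq:41} (at $r=0,1$) and direct Taylor expansion (at $r=a/c$), and finally invert.
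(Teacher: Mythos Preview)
Your proposal is correct and, in its primary route, coincides with the paper's proof: the paper too obtains the three local expansions of $g(a,c;\kappa)$---by direct power-series division at $\kappa=0$, and by the standard large-$\kappa$ Kummer asymptotics (combined with the Kummer transformation for $\kappa\to-\infty$)---and then inverts them, invoking the Lagrange inversion formula explicitly where you speak of ``matching coefficients.''

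The one genuinely different ingredient you offer is the self-contained alternative of feeding an ansatz $g=1-u_1/\kappa-u_2/\kappa^2-\cdots$ (resp.\ $g=-u_1/\kappa+\cdots$) into the Riccati relation~\eqref{eq:41} and solving for the $u_i$ recursively, together with the appeal to Theorem~\ref{thm:bounds} for the a~priori leading-order control that legitimises the manipulation. The paper does not take this route; it simply quotes the classical asymptotic expansion~\eqref{eq:kummer.asym} and its remainder. Your alternative buys independence from external special-function references and makes the argument internal to the paper, at the cost of a short extra justification step; the paper's version is shorter to write but leans on the cited asymptotics for rigour. Either is fine, and your observation that Theorem~\ref{thm:bounds} already supplies the needed a~priori bound is a nice structural point the paper leaves implicit.
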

This theorem is given in the appendix with detailed proof as
Theorem~\ref{th:xasymp}.

We can compute asymptotic expansions for the various
approximations by standard Laurent expansion. For $L(r)$ we
obtain:
$$
L(r)=-\frac{a(c-a+1)}{(c-a)r}+(c/(c-a)+c-a)+O(r),\quad r\to{0},
$$
$$
L(r)=\frac{c^2(c+1)}{a(c-a)}(r-a/c)+\frac{c^3(ac-(c+1)(c-a))}{a^2(c-a)^2}(r-a/c)^2+O((r-a/c)^3),\quad
r\to{a/c},
$$
$$
L(r)=\frac{c-a}{1-r}+(1-a)+\frac{a(a-c-1)}{c-a}(1-r)+O((1-r)^2),\quad
r\to{1}.
$$
For $U(r)$ we get:
$$
U(r)=-\frac{a}{r}+(c-a-1)+\frac{(c-a)(a+1)}{a}r+O(r^2),\quad
r\to{0},
$$
$$
U(r)=\frac{c^2(c+1)}{a(c-a)}(r-a/c)+\frac{c^3(2ac+a-c^2)}{a^2(c-a)^2}(r-a/c)^2+O((r-a/c)^3),\quad
r\to{a/c},
$$
$$
U(r)=\frac{((c/a)-a+c-1)}{1-r} - ((c/a)+a)+O(1-r),\quad r\to{1}.
$$
For $B(r)$ the expansions are:
$$
B(r)=-\frac{a}{r} + \frac{a^2-2ac+c^2-c-1}{c-a}+O(r),\quad
r\to{0},
$$
$$
B(r)=\frac{c^2(1+c)}{a(c-a)}(r-a/c)+\frac{c^3(1+c)^2(2a-c)}{a^2(c-a)^2(c+2)}(r-a/c)^2+O((r-a/c)^3),\quad
r\to{a/c},
$$
$$
B(r)=\frac{c-a}{1-r} + \frac{c+1-a^2}{a}+O(1-r),\quad r\to{1}.
$$
Finally, for the approximation (\ref{eq:bijral}) we have the
following expansions:
$$
BBG(r)=-\frac{a}{r}+(c-a)+O(r),\quad r\to{0},
$$
$$
BBG(r)=\frac{a}{2c(c-a)}+O(r-a/c),\quad r\to{a/c},
$$
$$
BBG(r)=\frac{2ac-2c^2-1}{2c(1-r)} - \frac{2ac+1}{2c}+O(1-r),\quad
r\to{1}.
$$
We summarize the results in Table~\ref{tab.summ} below.

Table~\ref{tab.summ} uses the following terminology: (i) we call
an approximation $f(r)$ to be \emph{incorrect} around $r=\alpha$,
if $f(r)/\kappa(r) \to {0,\infty}$ as $r\to \alpha$; (ii) we say
$f(r)$ is \emph{correct of order~1} around $r=\alpha$, if
$f(r)/\kappa(r) \to{C}$ such that $C \neq 0, \infty$ as
$r\to\alpha$; (iii) we say $f(r)$ is \emph{correct of order~2}
around $r=\alpha$ if
$f(r)/\kappa(r)=1+O(r-\alpha)$ as $r\to\alpha$; and (iv) $f(r)$ is \emph{correct of order~3} around $r=\alpha$ if $f(r)/\kappa(r)=1+O((r-\alpha)^2)$ as $r\to\alpha$. 

No matter how we count the total ''order of correctness'' it is
clear from Table~\ref{tab.summ} that our approximations are
superior to that
of~\citep{avleen}.%

The table shows that actually $L(r)$ and $U(r)$ can be viewed as
three-point $[2/2]$ Pad\'{e} approximations to $\kappa(r)$ at
$r=0$ and $r=a/c$ and $r=1$ with different orders at different
points, while $B(r)$ is a special non-rational three point
approximation with even higher total order of contact.

Moreover, since we not only give the order of correctness but also
prove the inequalities, we always know exactly which approximation
underestimates $\kappa(r)$ and which overestimates $\kappa(r)$.
Such information might be important to some applications. The
approximation of~\citep{avleen} is clearly less precise and does
not satisfy such inequalities. Also, note that all the above facts
are equally true in the Watson case $a=1/2$.

\begin{table}[ht]\small
  \centering
  \renewcommand{\arraystretch}{2}
  \begin{tabular}{|c|c|c|c|c|}
    \hline
    \backslashbox[10pt]{Point}{Approx.}  &  $L(r)$ & $B(r)$ & $U(r)$ & $BBG(r)$\\
    \hline
    $r=0$     & \parbox{0.6in}{Correct of order 1} & \parbox{0.6in}{Correct of order 2}& \parbox{0.6in}{Correct of  order 3} & \parbox{0.6in}{Correct of order 2}\\[5pt]
    \hline
    $r=a/c$   & \parbox{0.6in}{Correct of order 2} & \parbox{0.6in}{Correct of order 3}& \parbox{0.6in}{Correct of  order 2} & \parbox{0.6in}{Incorrect} \\[5pt]
    \hline
    $r=1$     & \parbox{0.6in}{Correct of order 3} & \parbox{0.6in}{Correct of order 2}& \parbox{0.6in}{Correct of  order 1} & \parbox{0.6in}{Correct of order 1} \\[5pt]
    \hline
  \end{tabular}
  \caption{Summary of various approximations}
\label{tab.summ}
\end{table}

\section{Application to Mixture Modelling and Clustering}
\label{sec:mixture} Now that we have shown how to compute
maximum-likelihood parameter estimates, we proceed onto
\emph{mixture-modelling} for mW distributions.

Suppose we observe the set ${\cal X} = \{\bm{x}_1,\dots,\bm{x}_n
\in \proj\}$ of i.i.d.\ samples. We wish to model this set using a
mixture of $K$ mW distributions. Let $W_p(\x | \m_j, \kappa_j)$ be
the density of the $j$-th mixture component, and $\pi_j$ its prior
({\small $1\le j \le K$}) -- then, for observation $\bm{x}_i$ we
have the density
\begin{equation*}
  f(\bm{x}_i | \m_1,\kappa_1,\ldots,\m_K,\kappa_K) =
  \sum\nolimits_{j=1}^K \pi_j W_p(\bm{x}_i | \m_j,\kappa_j).
\end{equation*}
The corresponding log-likelihood for the entire dataset $\cal X$
is given by
\begin{equation}
  \label{eq:17}
  {\cal L}({\cal X}; \m_1,\kappa_1,\ldots,\m_K,\kappa_K) =
  \nlsum_{i=1}^n \log \Bigl(\nlsum_{j=1}^K \pi_j W_p(\bm{x}_i |  \m_j,\kappa_j)\Bigr).
\end{equation}
To maximise the log-likelihood, we follow a standard Expectation
Maximisation (EM) procedure~\citep{dlr77}. To that end, first
bound ${\cal L}$ from below as
\begin{equation}
  \label{eq:19}
  {\cal L}({\cal X}; \m_1,\kappa_1,\ldots,\m_K,\kappa_K) \geq \nlsum_{ij} \beta_{ij} \log \frac{\pi_j W_p(\bm{x}_i |  \m_j, \kappa_j)}{\beta_{ij}},
\end{equation}
where $\beta_{ij}$ is the \emph{posterior} probability (for
$\x_i$, given component $j$), and it is defined by the
\emph{E-Step}:
\begin{equation}
  \label{eq:18}
  \beta_{ij} = \frac{\pi_j W_p(\bm{x}_i | \m_j,\kappa_j)}
  {\sum_l \pi_l W_p(\bm{x}_i | \m_l, \kappa_l)}.
\end{equation}
Maximising the lower-bound~\eqref{eq:19} subject to
$\m_j^T\m_j=1$, yields the \emph{M-Step}:
\begin{align}
  \label{eq:5}
  &\bm{\mu}_j = \bm{s}_1^j\quad\text{if}\quad \kappa_j >
  0,\qquad\bm{\mu}_j = \bm{s}_p^j\quad\text{if}\quad\kappa_j < 0,\\
  \label{eq:6}
  &\kappa_j = g^{-1}(1/2,p/2,r_j),\quad\text{where}\quad
  r_j = \m_j^T\bm{S}^j\m_j,\\
  \nonumber
  &\pi_j = \frac{1}{n}\nlsum_{i} \beta_{ij},
\end{align}
where $\bm{s}_i^j$ denotes the eigenvector corresponding to
eigenvalue $\lambda_i$ (where $\lambda_1 \ge \cdots \ge
\lambda_p$) of the \emph{weighted-scatter matrix:}
\begin{equation*}
  \bm{S}^j = \frac{1}{\nlsum_i \beta_{ij}}\nlsum_i \beta_{ij} \bm{x}_i\bm{x}_i^T.
\end{equation*}
Now we can iterate between~\eqref{eq:18},~\eqref{eq:5}
and~\eqref{eq:6} to obtain an EM algorithm. Pseudo-code for such a
procedure is shown below as Algorithm~\ref{algo:mow}.

\paragraph{Note: Hard Assignments.} We note that as usual, to reduce the computational burden, we can replace can E-step~\eqref{eq:18} by the standard \emph{hard-assignment} heuristic:
\begin{equation}
  \label{eq:31}
  \beta_{ij} =
  \begin{cases}
    1, & \text{if}\ j = \argmax_{j'} \log \pi_{j'} + \log W_p(\x_i| \m_{j'},\kappa_{j'}),\\
    0, & \text{otherwise}.
  \end{cases}
\end{equation}
The corresponding $M$-Step also simplifies considerably. Such
hard-assignments maximize a lower-bound on the incomplete
log-likelihood, and yield \emph{partitional-clustering} algorithms
(in fact, we show experimental results in
Section~\ref{sec:clustering} where we cluster data using a
partitional-clustering algorithm based on this hard-assignment
heuristic).

\begin{algorithm}[t]\small
  \DontPrintSemicolon
  \caption{\small EM Algorithm for mixture of Watson (moW)~\label{algo:mow}}
  \KwIn{$\X = \set{\x_1,\ldots,\x_n  : \text{ where each } \x_i \in \proj}$, $K$: number of components}
  \KwOut{Parameter estimates $\pi_j$, $\m_j$, and $\kappa_j$, for $1 \le j \le K$}
  Initialise $\pi_j, \m_j, \kappa_j$ for $1 \leq j \leq K$\;
  \While{not converged}{
    \emph{\{Perform the E-step of EM\}}\;
    \ForEach{$i$ and $j$}{
      Compute $\beta_{ij}$ using~(\ref{eq:18}) (or via~\eqref{eq:31}
      if using hard-assignments)
    }
    \emph{\{Perform the M-step of EM\}}\;
    \For{$j = 1$ to $K$}{
      $\pi_j \gets \frac{1}{n}\sum_{i=1}^n \beta_{ij}$\;
      Compute $\m_j$ using~(\ref{eq:5})\;
      Compute $\kappa_j$ using~(\ref{eq:6})\;
    }
  }
\end{algorithm}
\begin{algorithm}[!h]\small
  \caption{\small Diametrical Clustering~\label{algo:diametric}}
  \SetAlgoLined
  \DontPrintSemicolon
    \KwIn{${\cal X} = \{\x_1,\ldots,\x_n : \x_i \in \proj\}$, $K$: number of clusters}
    \KwOut{A partition $\{{\cal X}_j : 1 \le j \le K\}$ of ${\cal X}$, and centroids $\m_j$}
    Initialise $\m_j$ for $1 \leq j \leq K$\;
    \While{\emph{not converged}}{
      \emph{E-step:}\;
      Set ${\cal X}_j \gets \emptyset$ for $1 \leq j \leq K$\;
      \For{$i = 1$ to $n$}{
        ${\cal X}_j \gets {\cal X}_j \cup \{\bm{x}_i\}$ where $j = \argmax_{1\le h\le K} (\bm{x}_i^T\m_{h})^2$}
      \emph{M-step:}\;
        \For{$j = 1$ to $K$}{
      $\bm{A}_j = \nlsum_{\x_i \in {\cal X}_j} \bm{x}_i\x_i^T$\;
      $\m_j \gets \bm{A}_j\m_j / \|\bm{A}_j\m_j\|$\;
    }
  }
\end{algorithm}
\begin{figure}
  \centering
\hspace*{-20pt}\includegraphics[scale=0.45]{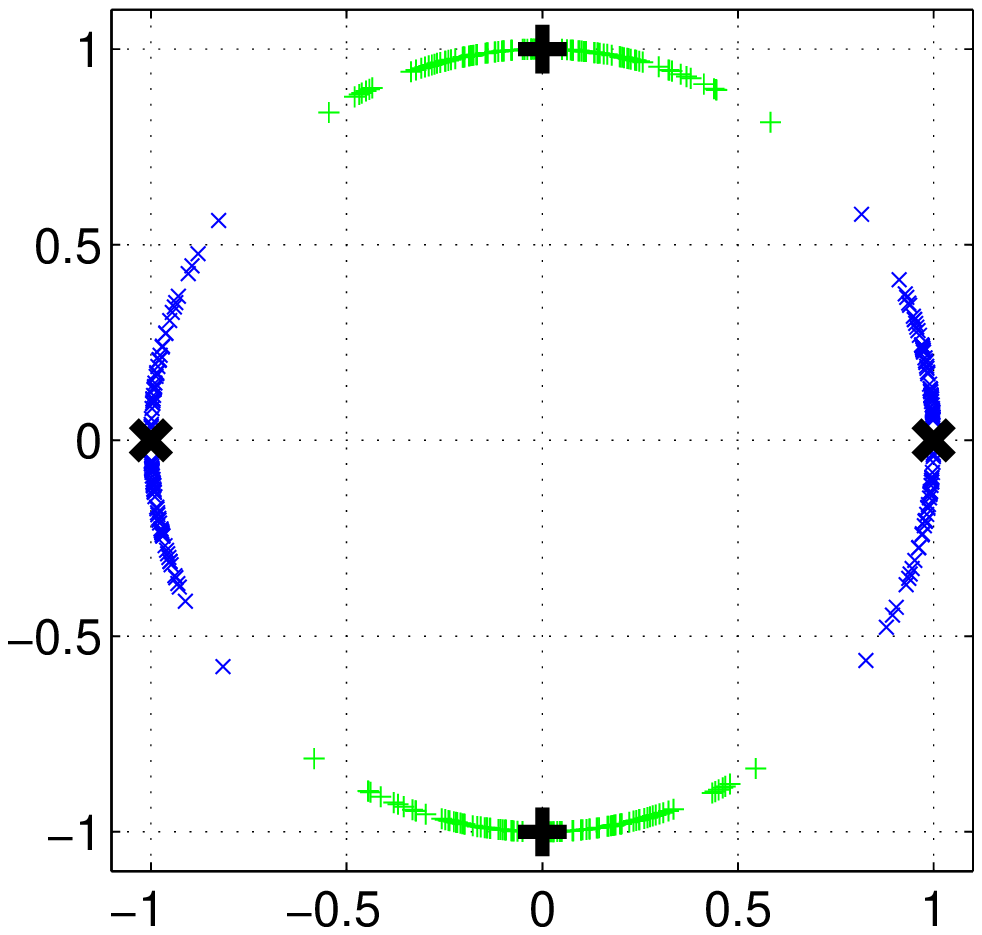}~\includegraphics[scale=0.45]{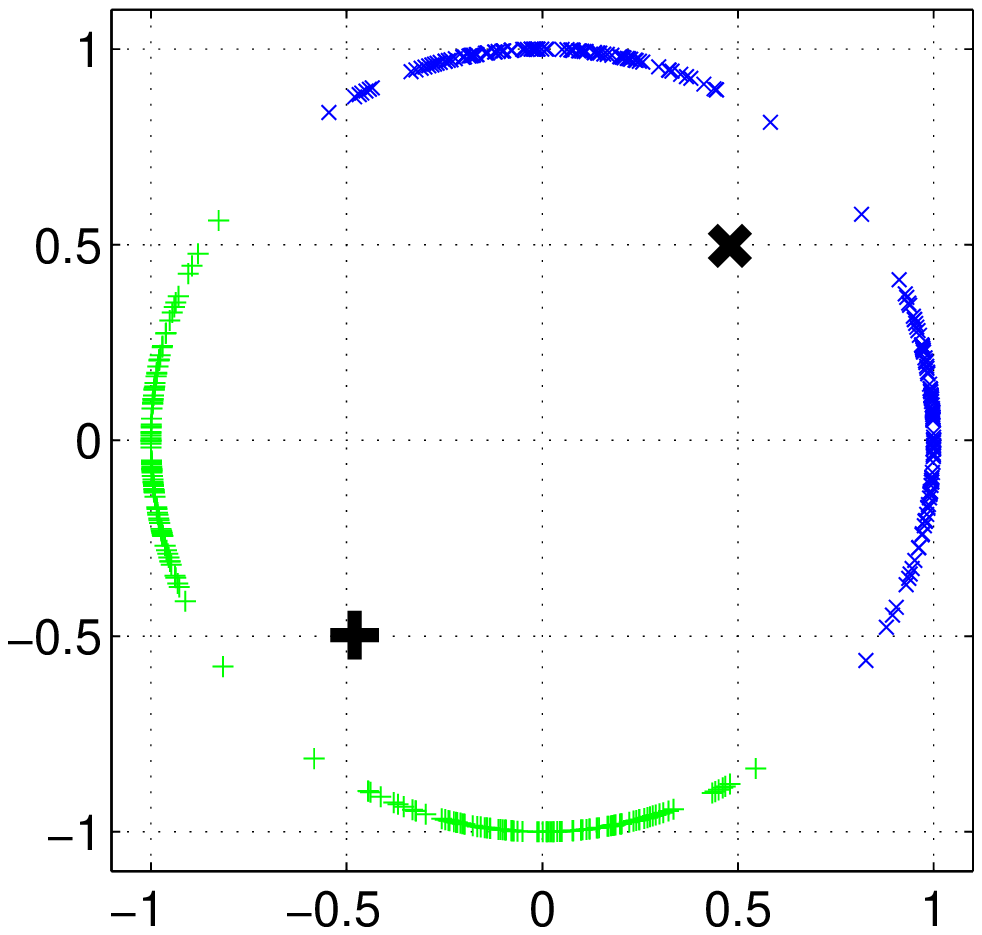}~\includegraphics[scale=0.45]{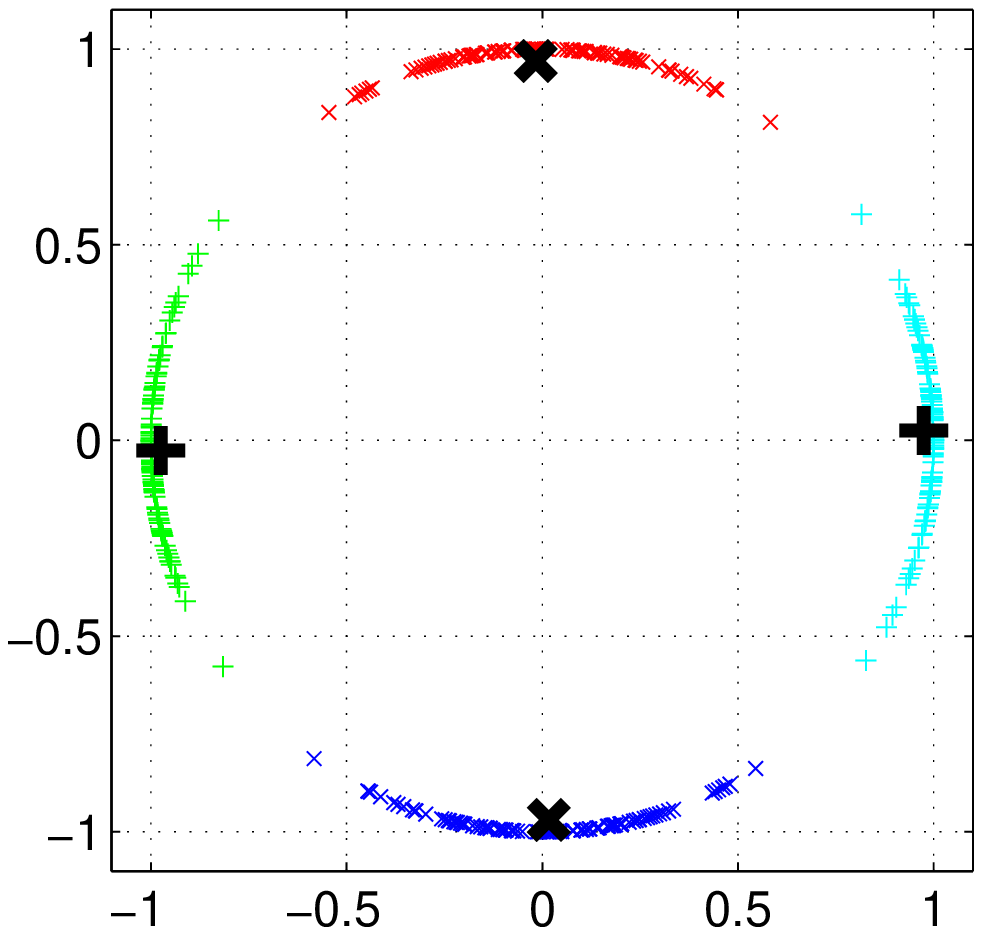}
  \caption{The left panel shows axially symmetric data that has two clusters (centroids are indicated by '+' and  'x'). The middle and right panel shows clustering yielded by (Euclidean) K-means (note that the centroids fail to lie on the circle in this case) with $K=2$ and $K=4$, respectively. Diametrical clustering recovers the true clusters in the left panel.}
  \label{fig:diam}
\end{figure}

\subsection{Diametrical Clustering}
\label{ref:diametric} We now turn to the diametrical clustering
algorithm of~\citet{diametric}, and show that it is merely a
special case of the mixture-model described above. Diametrical
clustering is motivated by the need to group together correlated
and anti-correlated data points (see Figure~\ref{fig:diam} for an
illustration). For data normalised to have unit euclidean norm,
such clustering treats \emph{diametrically} opposite points
equivalently. In other words, $\bm{x}$ lies on the projective
plane. Therefore, a natural question is whether diametrical
clustering is related to Watson distributions, and if so, how?

The answer to this question will become apparent once we recall
the diametrical clustering algorithm (shown as
Algorithm~\ref{algo:diametric}) of~\citep{diametric}.  In
Algorithm~\ref{algo:diametric} we have labelled the ``E-Step'' and
the ``M-Step''. These two steps are simplified instances of the
E-step~(\ref{eq:18}) (alternatively~\ref{eq:31}) and
M-step~(\ref{eq:5}). To see why, consider the
E-step~\eqref{eq:18}. If $\kappa_j \to \infty$, then for each $i$,
the corresponding posterior probabilities~$\beta_{ij} \to
\{0,1\}$; the particular $\beta_{ij}$ that tends to 1 is the one
for which $(\m_j^T\x_i)^2$ is maximised -- this is precisely the
choice used in the E-step of Algorithm~\ref{algo:diametric}. With
binary values for $\beta_{ij}$, the M-Step~\eqref{eq:5} also
reduces to the version followed by Algorithm~\ref{algo:diametric}.

An alternative, perhaps better view is obtained by regarding
diametrical clustering as a special case of mixture-modelling
where a hard-assignment rule is used. Now, if all mixture
components have the \emph{same, positive} concentration parameter
$\kappa$, then while computing $\beta_{ij}$ via~(\ref{eq:31}) we
may ignore $\kappa$ altogether, which reduces
Algorithm~\ref{algo:mow} to Algorithm~\ref{algo:diametric}.

Given this interpretation of diametrical clustering, it is natural
to expect that the additional modelling power offered by mixtures
of Watson distributions might lead to better clustering. This is
indeed the case, as indicated by some of our experiments in
Section~\ref{sec:clustering} below, where we show that merely
including the concentration parameter $\kappa$ can lead to
improved clustering accuracies, or to clusters with higher quality
(in a sense that will be made more precise below).

\section{Experiments}
\label{sec:experiments} We now come to numerical results to assess
the methods presented. We divide our experiments into two groups.
The first group comprises numerical results that illustrate
accuracy of our approximation to $\kappa$. The second group
supports our claim that the extra modelling power offered by moWs
also translates into better clustering results.

\subsection{Estimating $\kappa$}
\label{sec:kappa.expt} We show two representative experiments to
illustrate the accuracy of our approximations. The first set
(\S\ref{sec:kappa.bij}) compares our approximation with that
of~\cite{avleen}, as given by~(\ref{eq:bijral}). This set
considers the Watson case, namely $a=1/2$ and varying
dimensionality $c=p/2$. The second set (\S\ref{sec:kappa.gen}) of
experiments shows a sampling of results for a few values of $c$
and $\kappa$ as the parameter $a$ is varied. This set illustrates
how well our approximations behave for the general nonlinear
equation~\eqref{eq:61}.

\subsubsection{Comparison with the BBG approximation for the Watson case}
\label{sec:kappa.bij} Here we fix $a=1/2$, and vary $c$ on an
exponentially spaced grid ranging from $c=10$ to $c=10^4$. For
each value of $c$, we generate geometrically spaced values of the
``true'' $\kappa_*$ in the range $[-200c, 200c]$. For each choice
of $\kappa_*$ picked within this range, we compute the ratio $r =
g(1/2, c, \kappa_*)$ (using~\textsc{Mathematica} for high
precision). Then, given $a=1/2$, $c$, and $r$, we estimate
$\kappa_*$ by solving $\kappa \approx g^{-1}(1/2,c,r)$ using
$BBG(r)$, $L(r)$, $B(r)$, and $U(r)$, given by  \eqref{eq:bijral},
\eqref{eq:lower}, \eqref{eq:mid}, and \eqref{eq:upper},
respectively.

\begin{figure}[!h]
  \centering
  \hskip-12pt\includegraphics[width=.35\linewidth]{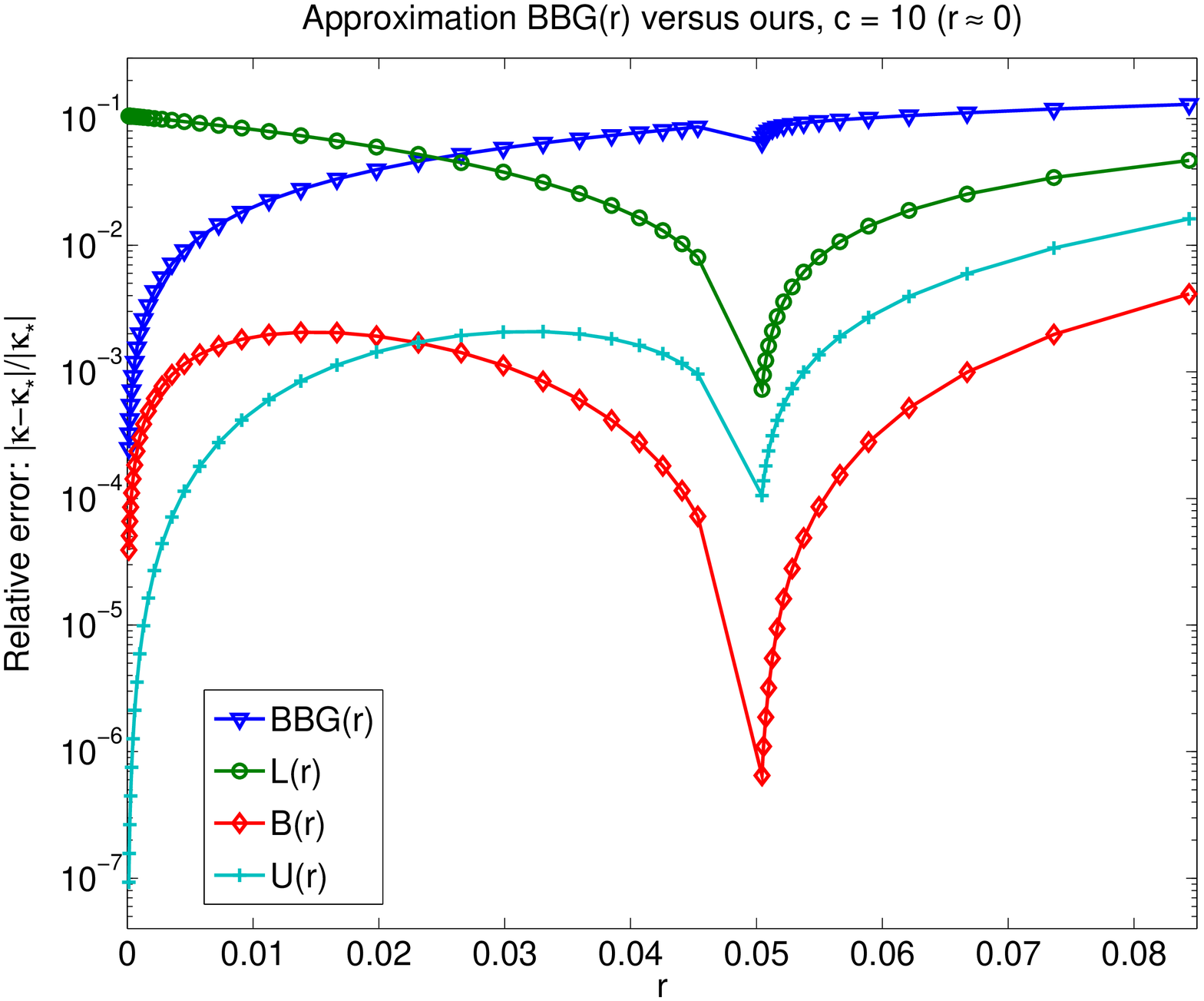}\hskip-8pt
  \includegraphics[width=.35\linewidth]{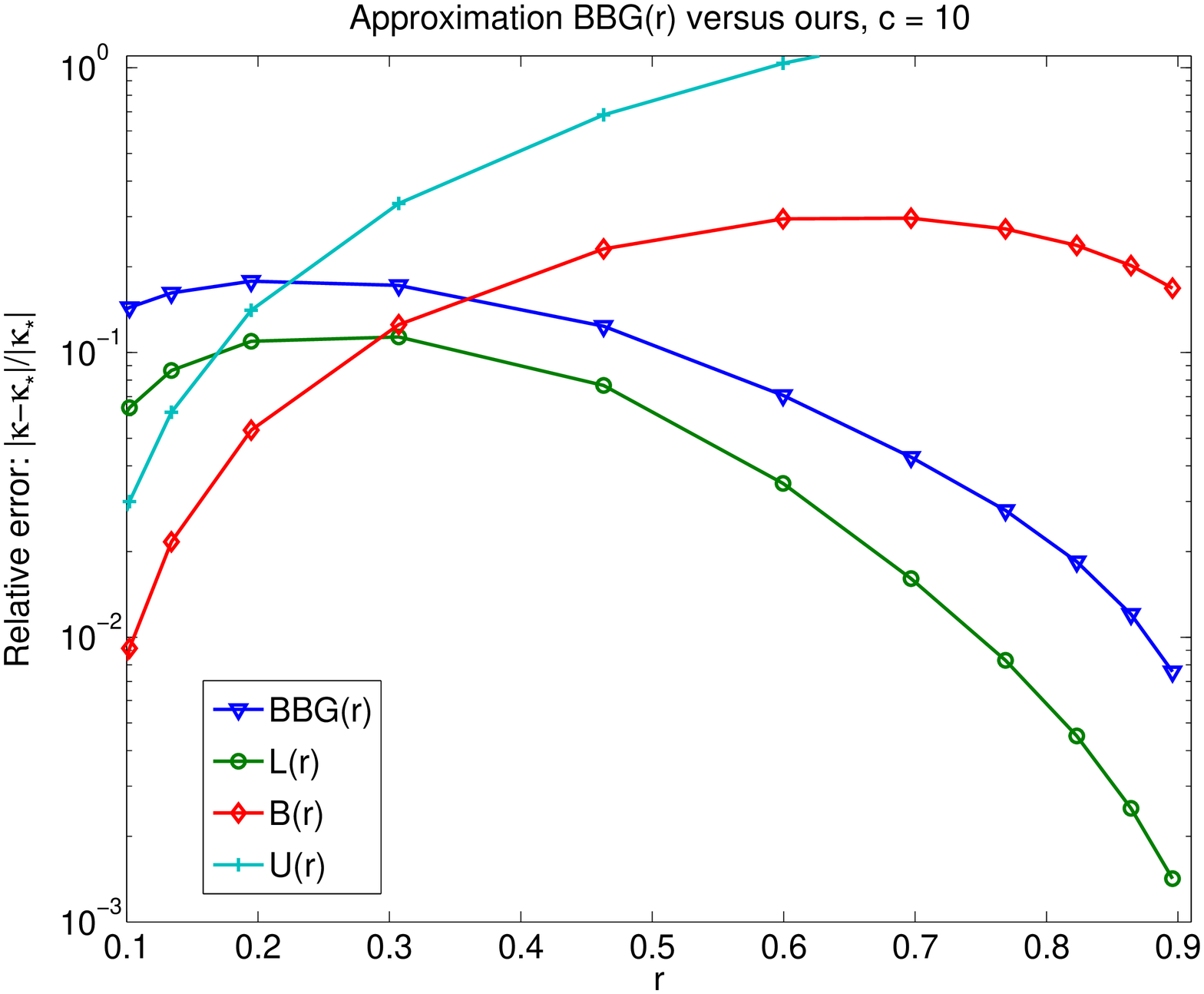}\hskip-8pt
  \includegraphics[width=.35\linewidth]{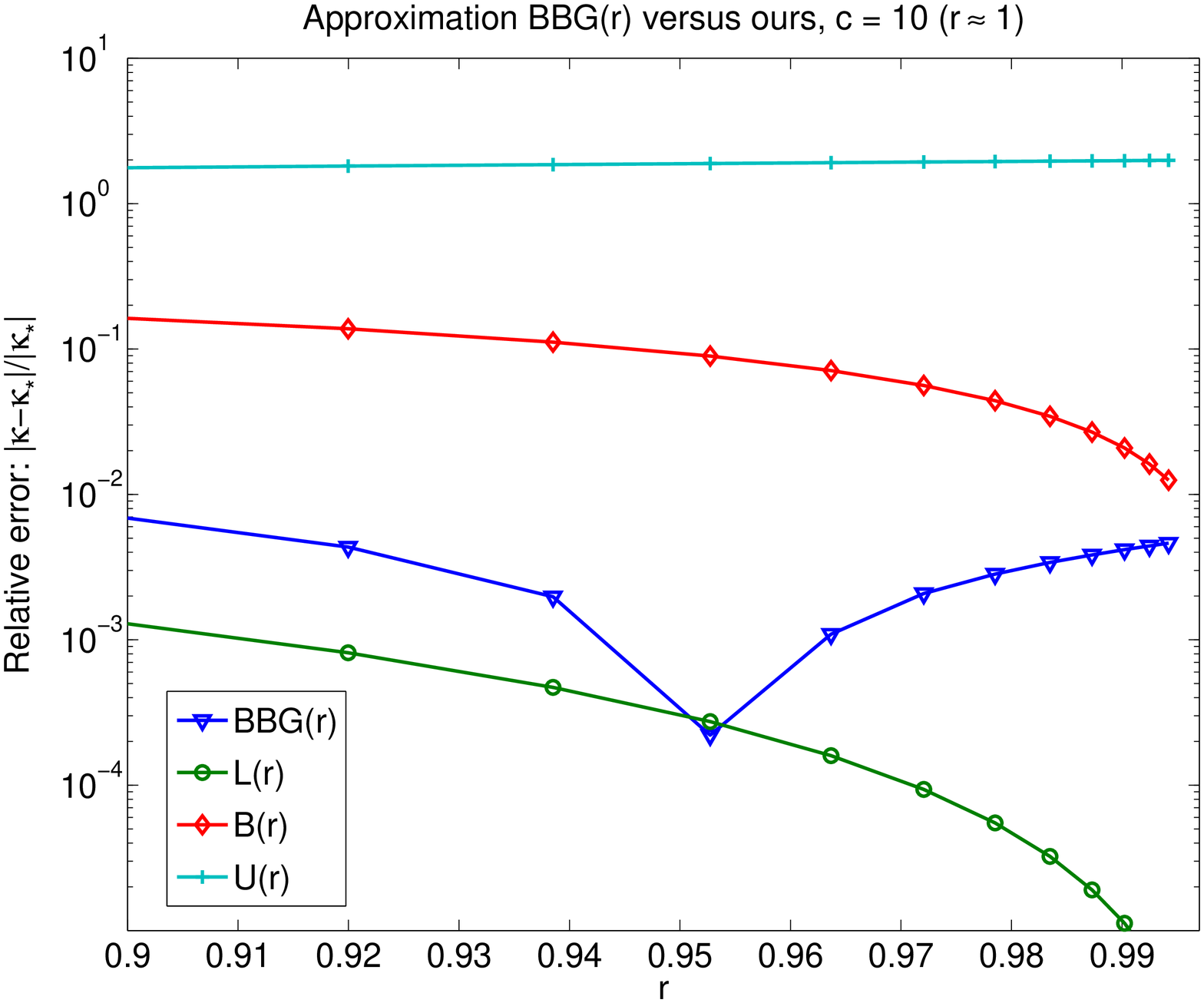}\\
  \hskip-12pt\includegraphics[width=.35\linewidth]{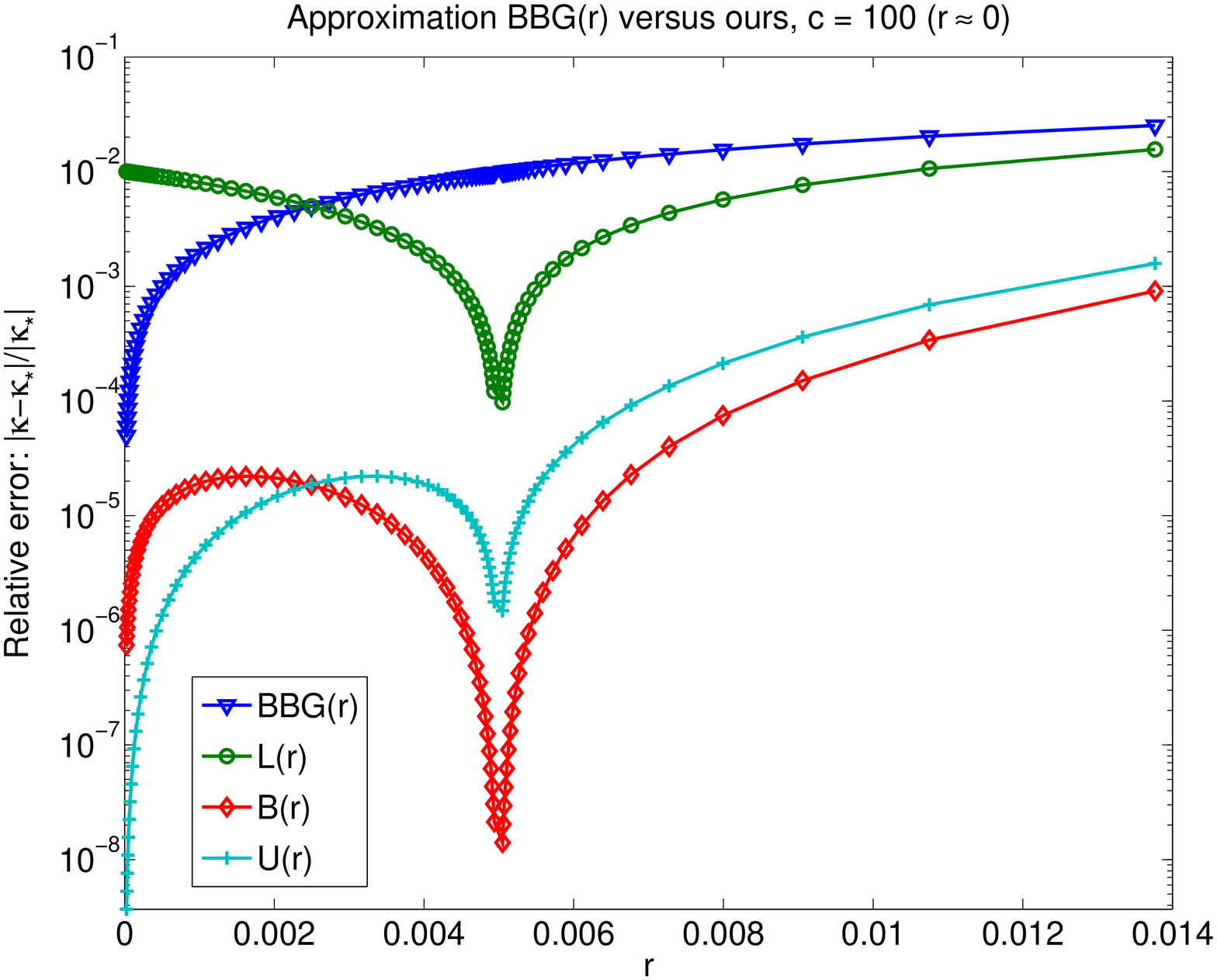}\hskip-8pt
  \includegraphics[width=.35\linewidth]{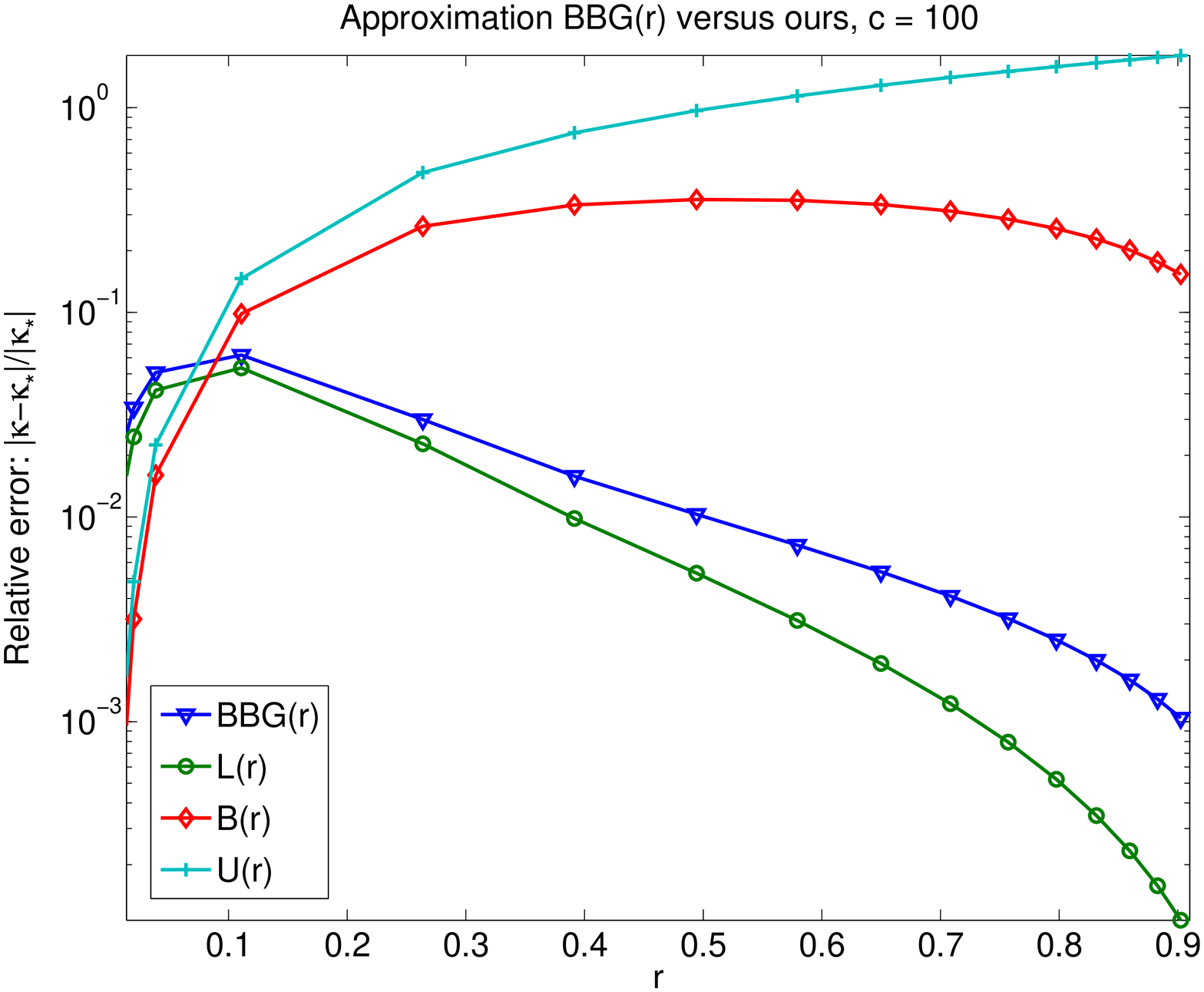}\hskip-8pt
  \includegraphics[width=.35\linewidth]{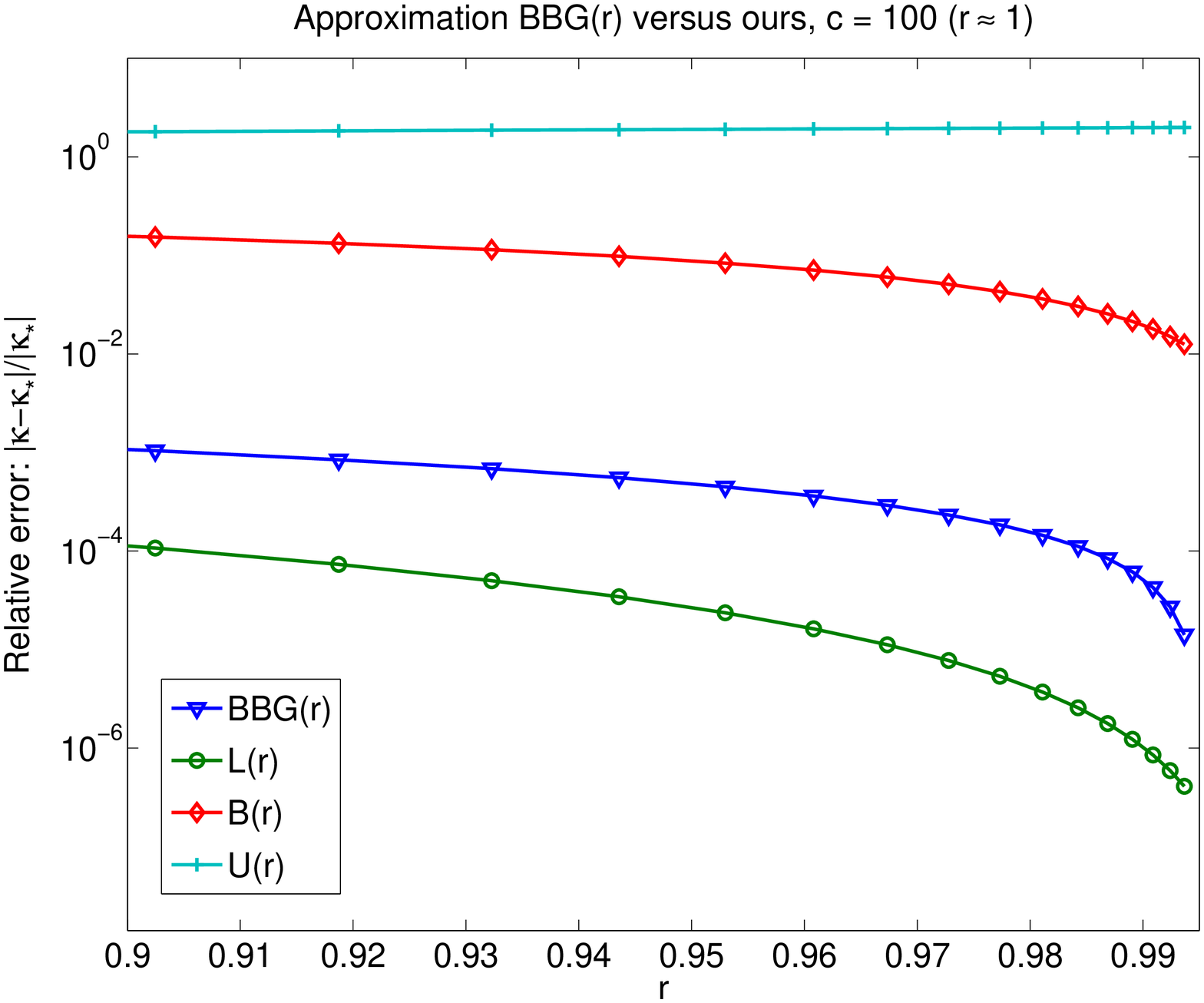}\\
  \hskip-12pt\includegraphics[width=.35\linewidth]{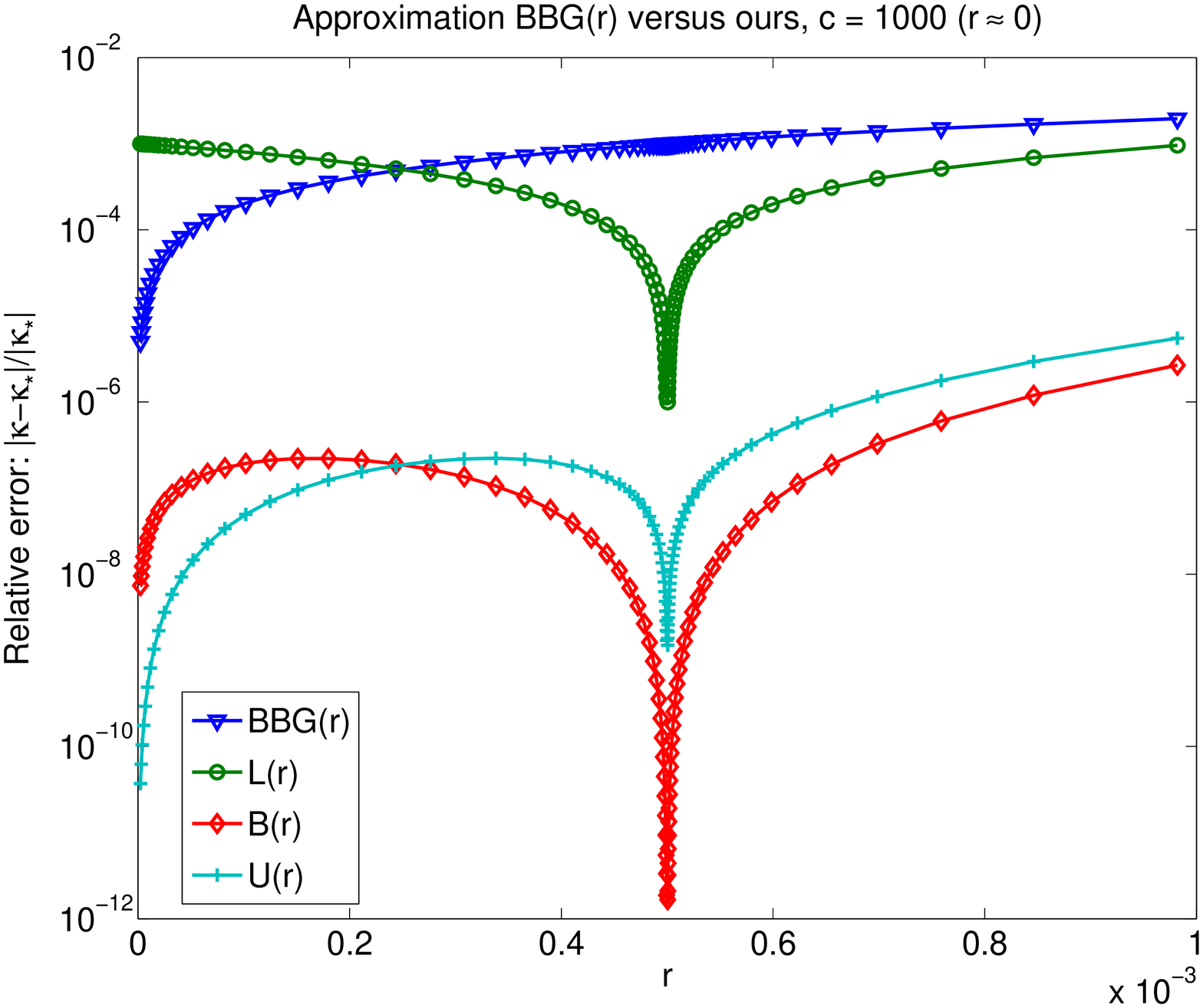}\hskip-8pt
  \includegraphics[width=.35\linewidth]{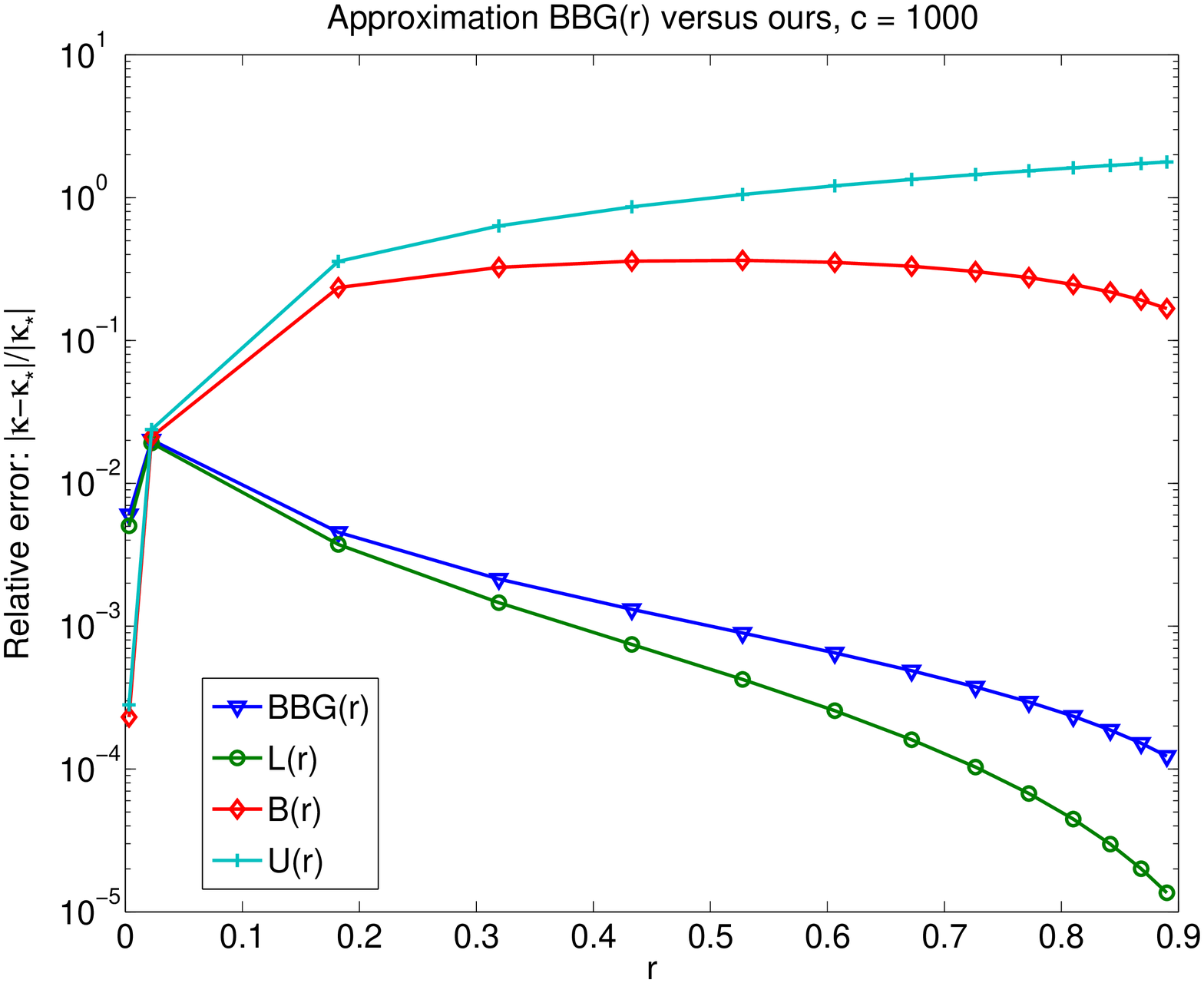}\hskip-8pt
  \includegraphics[width=.35\linewidth]{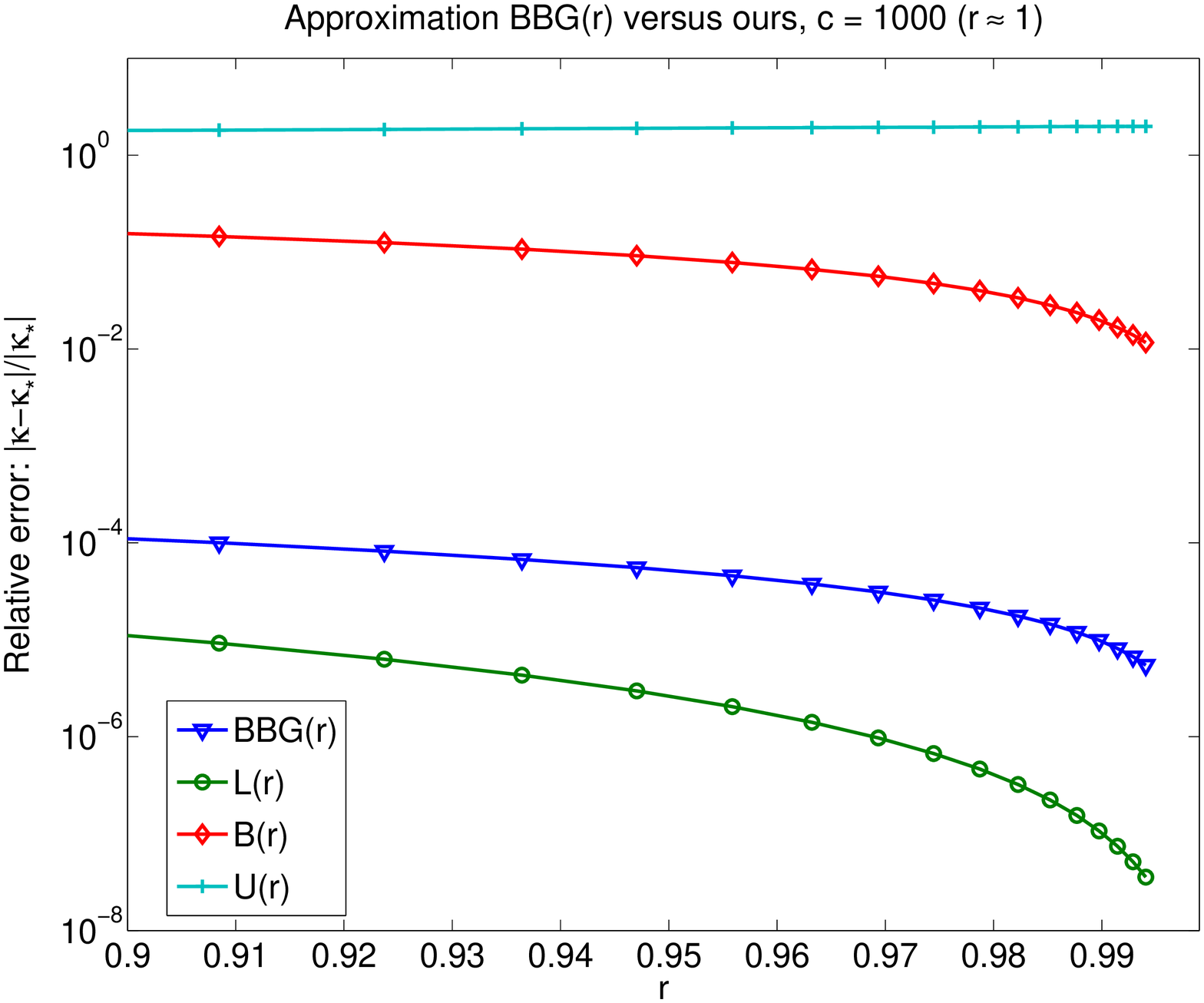}
  \caption{Relative errors $|\hat{\kappa}-\kappa_*|/|\kappa_*|$ of $BBG(r)$, $L(r)$, $B(r)$, and $U(r)$ for $c \in \set{10,100,1000,10000}$ as $r$ varies between $(0,1)$. The left column shows errors for ``small'' $r$ (i.e., $r$ close to $0$), the middle column shows errors for ``mid-range'' $r$, and the last column shows errors for the ``high'' range ($r \approx 1$).}
  \label{fig.one}
\end{figure}
Figure~\ref{fig.one} shows the results of computing these
approximations. Two points are immediate from the plots: (i)
approximation $L(r)$ is more accurate than $BBG(r)$ across almost
the whole range of dimensions and $r$ values; and (ii) for small
$r$, $BBG(r)$ can be more accurate than $L(r)$, but in this case
both $U(r)$ and $B(r)$ are much more accurate.

\subsubsection{Comparisons of the approximation for fixed $c$ and varying $a$}
\label{sec:kappa.gen} 
\begin{figure}[!htbp]
  \centering
  \includegraphics[width=.45\linewidth]{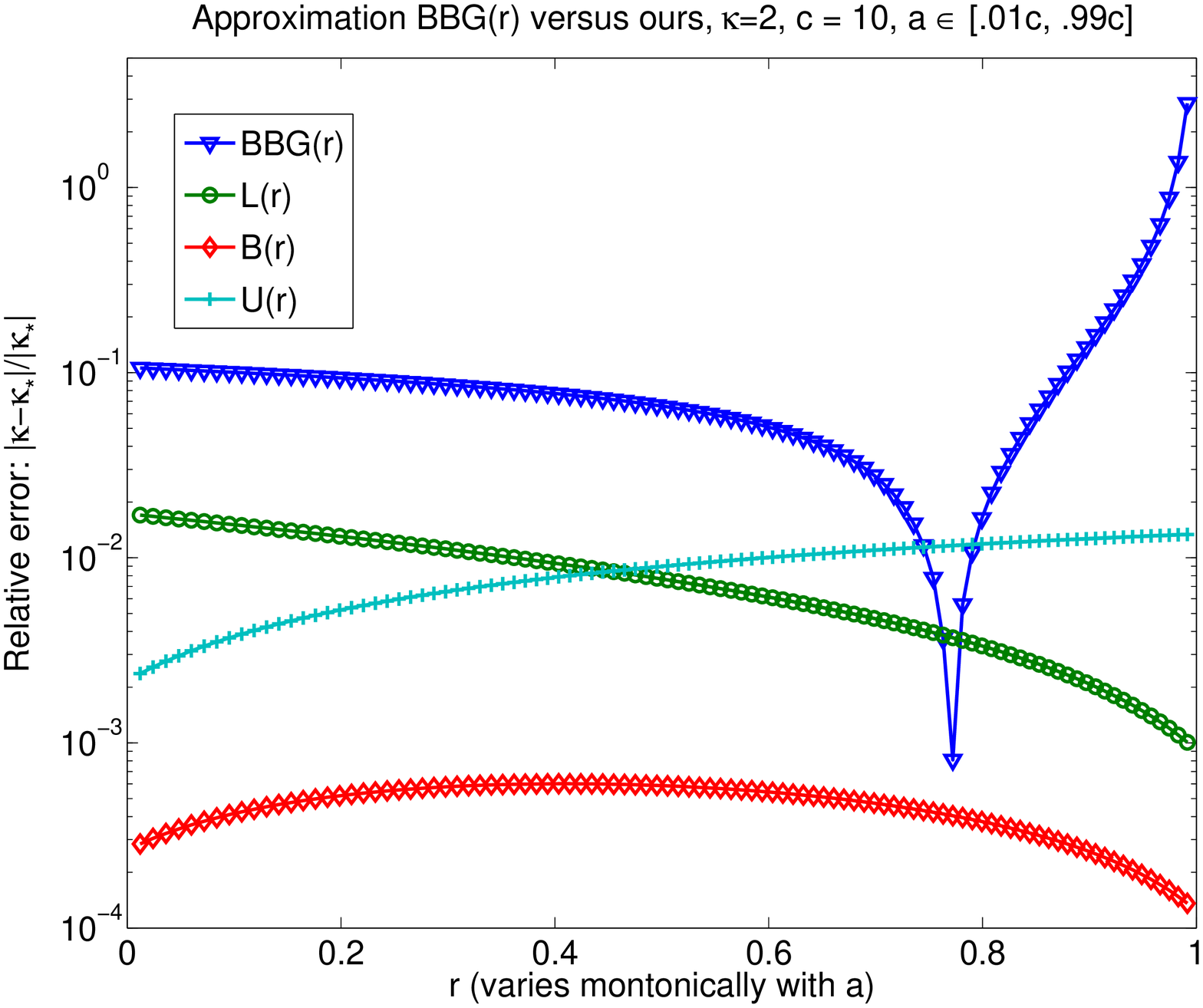}   \includegraphics[width=.45\linewidth]{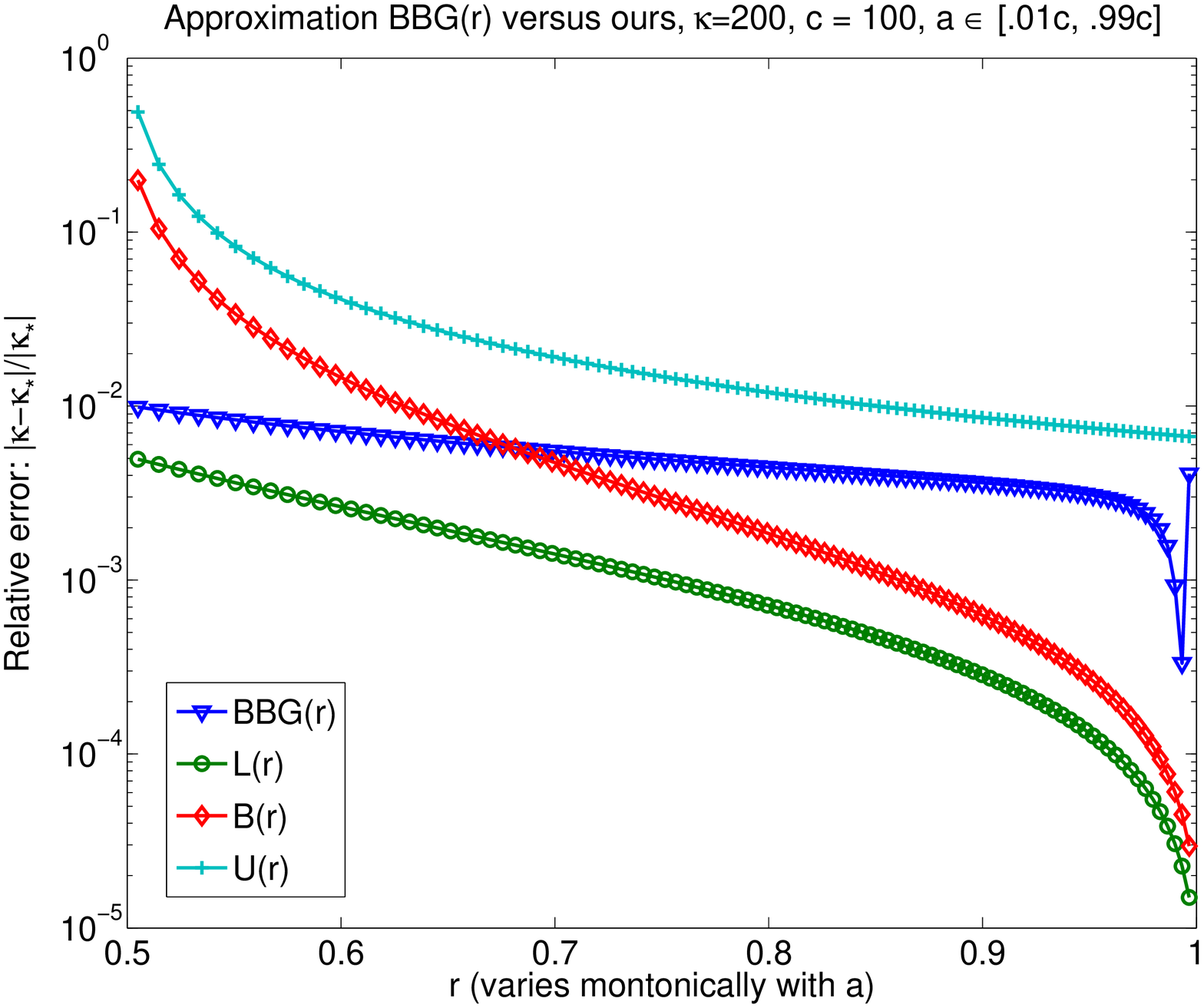}\\
  \includegraphics[width=.45\linewidth]{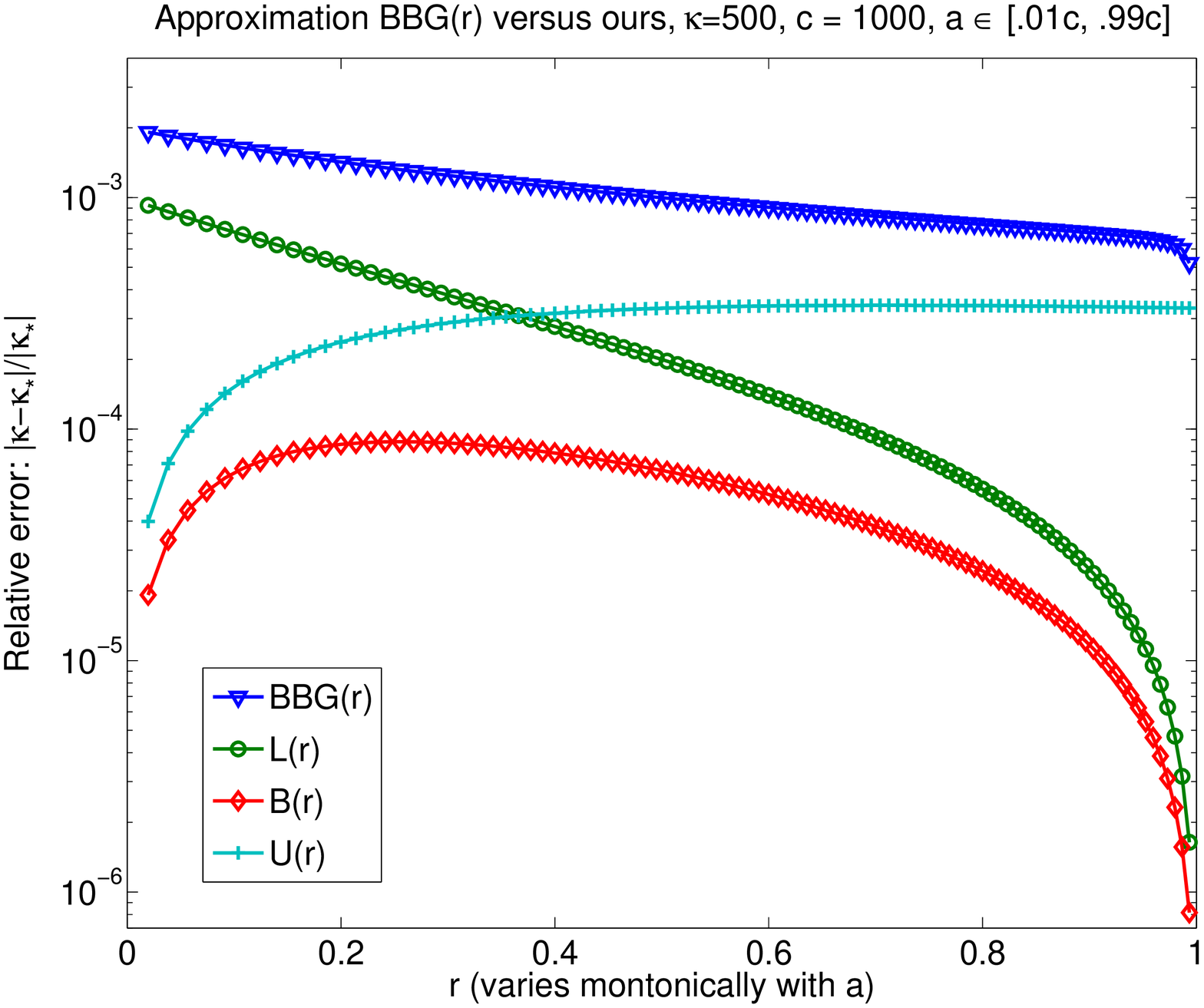}   \includegraphics[width=.45\linewidth]{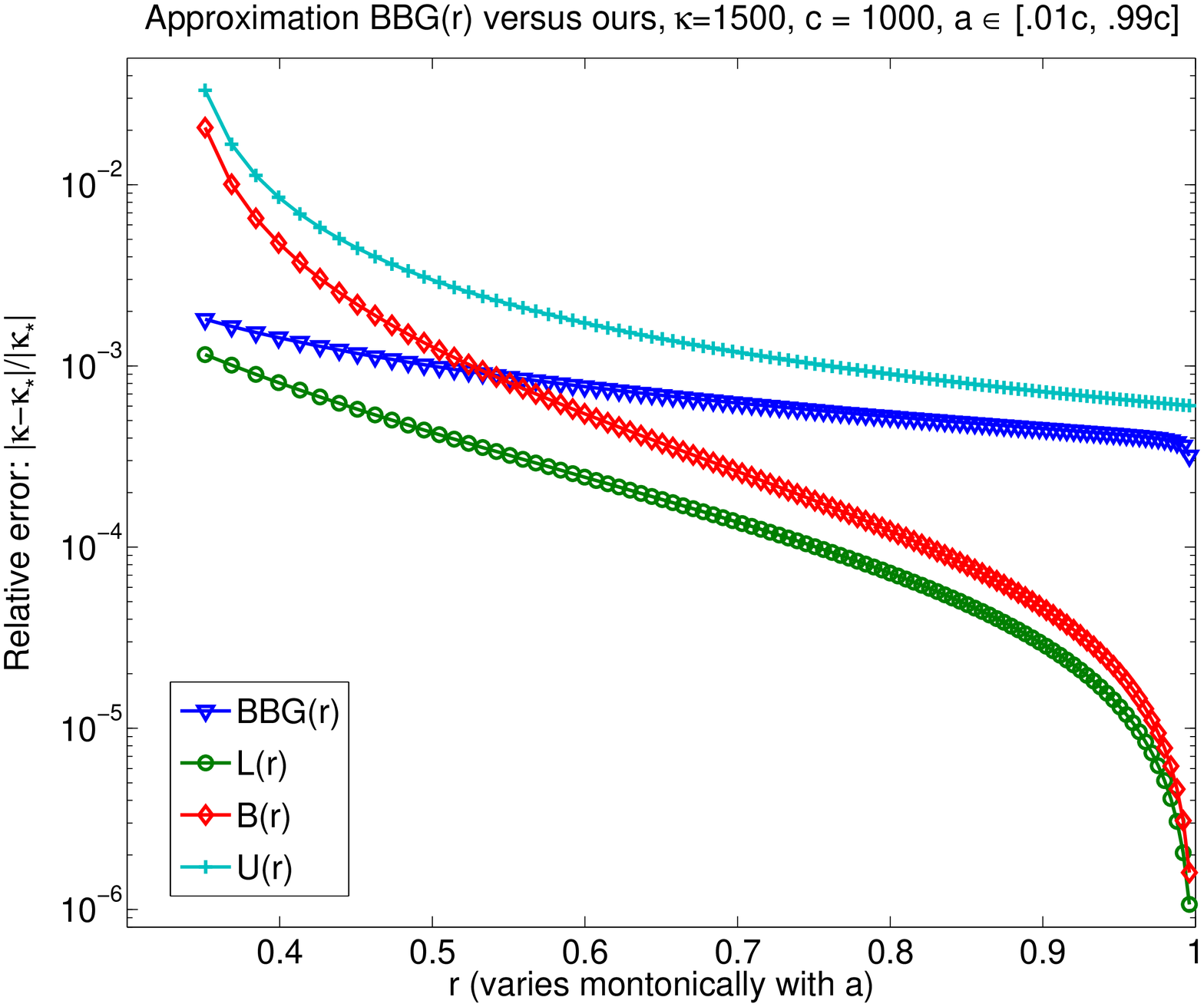}
  \caption{Relative errors of $BBG(r)$, $L(r)$, $B(r)$, and $U(r)$ for different sets of $c$ and $\kappa$ values, as $a$ is varied from $0.01c$ to $0.99c$.}
  \label{fig.kapp3}
\end{figure}
In our next set of experiments, we chose a
few values of $c$ and $\kappa$ (see Figure~\ref{fig.kapp3}), and
varied $a$ linearly to lie in the range $[0.01c, 0.99c]$.
Figure~\ref{fig.kapp3} reports the relative errors of
approximation incurred by the various approximations.

From the plots it is clear that one of $L(r)$, $B(r)$, or $U(r)$
always yields results more accurate than $BBG(r)$. The various
results suggest the following rough rule-of-thumb: prefer $U(r)$
for $0<r<a/(2c)$, prefer $B(r)$ for $a/(2c)\leq{r}<2a/\sqrt{c}$
and prefer $L(r)$ for $2a/\sqrt{c}\leq{r}<1$.

\subsection{Clustering using mW distributions}
\label{sec:clustering} Now we turn to our second set of
experiments. Below we show results of two experiments: (i) with
synthetic data, where a desired ``true-clustering'' is known; and
(ii) with gene expression data for which previously axially
symmetric clusters have been considered.

For both our experiments, we compare moW (Algorithm~\ref{algo:mow}
with~\eqref{eq:31} for the E-step) against the diametrical
clustering procedure of~\citet{diametric}. The key aim of the
experiments is to show that the extra modelling power offered by a
mixture of mW distributions can provide clustering results better
than plain diametrical clustering.

\subsubsection{Synthetic data}
We generated data that merely exhibit axial symmetry and have
varying degrees of concentration around given mean directions.
Since both the diametrical method as well as moW model axial
symmetry they can be fairly compared on this data. The distinction
comes, however, where moW further models concentration (via
$\kappa$), and in case the generated data is sufficiently
concentrated, this modelling translates into empirically superior
performance. Naturally, to avoid unfairly skewing results in
favour of moW, we do not compare it against diametrical clustering
on synthetic data sampled from a mixture of mW distributions as
moW explicitly optimises such a model.

For our data generation we need to sample points from
$W_p(\kappa,\m)$, for which we invoke a simplified version of the
powerful Gibbs sampler of~\citep{hoff} that can simulate
Bingham-von Mises-Fisher distributions. We note here that Bingham
distribution is parametrised by a matrix $\bm A$, and to use it
for sampling Watson distributions, we merely need to realise that
$\bm{A}=\kappa \m\m^T$.

With the sampling code in hand, we generate synthetic datasets
with varying concentration as follows. First, two random unit
vectors $\m_1, \m_2 \in \mathbb{P}^{29}$ are selected. Then, we
fix $\kappa_1=3$ and  sample $200$ points from
$W_3(\kappa_1,\m_1)$. Next, we vary $\kappa_2$ in the set $\set{3,
10, 20, 50, 100}$, and generate $200$ points for each value of
$\kappa_2$ by sampling from $W_p(\kappa_2,\m_2)$. Finally, by
mixing the $\kappa_1$ component with each of the five $\kappa_2$
components we obtain five datasets ${\cal X}_t$ ($1 \le t \le 5$).

Each of these five datasets is then clustered into two clusters,
using moW and diametrical clustering. Both algorithms are run ten
times each to smooth out the effect of random initializations.
Table~\ref{tab:synth} shows the results of clustering by
displaying the accuracy which measures the percentage of data
points that were assigned to the ``true'' clusters (i.e., the true
components in the mixture). The accuracies strongly indicate that
explicit modelling of concentration leads to better clustering as
$\kappa_2$ increases. In other words, larger $\kappa_2$ makes
points from the second cluster more concentrated around $\pm\m_2$,
thereby allowing easier separation between the clusters.

\begin{table}\small
  \centering
  \caption{Percentages of accurately clustered points for diametrical clustering vs.\ moW (over 10 runs). Since this is simulated data, we knew the cluster labels. The accuracy is then computed by matching the predicted labels with the known ones. In line with the theory, with increasing concentration the modelling power offered by moW shows a clear advantage over ordinary diametrical clustering.}\label{tab:synth}
  \begin{tabular}{c|c|c}
    $\kappa_2$ & Diametrical (avg/best/worst)-\%& moW (avg/best/worst)-\%\\
    \hline
    $3$ &  52.65 / 56.50 / 51.50 & 51.65 / 53.50 / 50.50\\
    $10$  & 52.75 / 56.00 / 50.50 & 54.10 / 57.00 / 50.00\\
    $20$ & 57.60 / 64.00 / 51.50 & 74.45 / 87.00 / 63.50\\
    $50$ & 66.00 / 78.50 / 50.00 & 99.50 / 99.50 / 99.50\\
    $100$ & 71.20 / 81.00 / 55.00 & 100.00 / 100.00 / 100.00\\
    \hline
  \end{tabular}
\end{table}

\subsubsection{Real Data}
\label{sec:realdata} We now compare clustering results of moW with
those of diametrical clustering on three gene microarray datasets
that were also used in the original diametrical clustering
paper~\citep{diametric}. These datasets are: (i) Human
Fibroblasts~\citep{iyer}; (ii) Yest Cell Cycle~\citep{spellman};
and (iii) Rosetta yeast~\citep{rosetta}. The respective matrix
sizes that we used were: (i) $517\times 12$; (ii) $696\times 82$;
and (iii) $900 \times 300$ (these 900 genes were randomly selected
from the original $5245$).

Since we do not have ground-truth clusterings for these datasets,
we validate our results using internal measures. Specifically, we
compute two scores: \emph{homogeneity} and \emph{separation},
which are defined below by $H_{\text{avg}}$ and $S_{\text{avg}}$,
respectively. Let ${\cal X}_j \subset {\cal X}$ denote cluster
$j$; then we define
\begin{align}
  \label{havg}
  H_{\text{avg}} & = \frac{1}{n}\sum_{j=1}^K \sum_{\x_i \in {\cal X}_j} (\x_i^T\m_j)^2,\\
  \label{savg}
  S_{\text{avg}} & = \frac{1}{\sum_{j \neq l}|{\cal X}_j||{\cal X}_{l}|}\sum_{j \neq l}|{\cal X}_j||{\cal X}_{l}|\min(\m_j^T\m_{l}, -\m_j^T\m_{l}).
\end{align}
We note a slight departure from the standard in our definitions
above. In~\eqref{havg}, instead of summing over $\x_i^T\m_j$, we
sum over their squares, while in~\eqref{savg}, instead of
$\m_j^T\m_{l}$, we use $\min(\m_j^T\m_{l},-\m_j^T\m_{l})$ because
for us $+\m_j$ and $-\m_j$ represent the same cluster.

We note that  diametrical clustering optimises precisely the
criterion~\eqref{havg}, and is thus favoured by our criterion.
Higher values of $H_{\text{avg}}$ mean that the clusters have
higher intra-cluster cohesiveness, and thus are ``better''
clusters. In contrast, lower values of $S_{\text{avg}}$ mean that
the inter-cluster dissimilarity is high, i.e., better separated
clusters.

\begin{table}\small
  \caption{Clustering accuracy on gene-expression datasets (over 10 runs). Noticeable differences (i.e., $> 0.02$) between the algorithms are highlighted in bold.}\label{tab:real}
  \centering
  \begin{tabular}{c|c|c}
    Method & Diametrical (avg/best/worst) & moW (avg/best/worst)\\
    \hline
    \textbf{Yeast-4}&\\
    Homogeneity & 0.38 / 0.38 / 0.38 & 0.37 / 0.37 / 0.37\\
    Separation & -0.00 / -0.23 / 0.24 & {\bf -0.04} / -0.23 / {\bf 0.20}\\
    \hline
    \textbf{Yeast-6} &\\
    Homogeneity & 0.41 / 0.41 / 0.40 & 0.41 / 0.41 / 0.40\\
    Separation  & -0.06 / -0.15 / 0.14 & -0.07 /{\bf -0.20} / 0.13\\
    \hline
    \textbf{Rosetta-2} &\\
    Homogeneity & 0.16 / 0.17 / 0.16 & 0.16 / 0.17 / 0.16\\
    Separation  & 0.24 / 0.08 / 0.28 & {\bf -0.20} / {\bf -0.28} / {\bf 0.09}\\
    \hline
    \textbf{Rosetta-4}&\\
    Homogeneity & 0.23 / 0.23 / 0.23 & 0.23 / 0.23 / 0.23\\
    Separation  & -0.01 / -0.08 / 0.16 & -0.03 / -0.09 / {\bf 0.12}\\
    \hline
    \textbf{Fibroblast-2} &\\
    Homogeneity & 0.70 / 0.70 / 0.70 & 0.70 / 0.70 / 0.70\\
    Separation  & 0.26 / -0.65 / 0.65 & {\bf -0.01} / -0.65 / 0.65\\
    \hline
    \textbf{Fibroblast-5} &\\
    Homogeneity & 0.78 / 0.78 / {\bf 0.78} & 0.76 / 0.76 / 0.75\\
    Separation  & -0.05 / -0.28 / 0.40 & {\bf -0.12} / -0.30 / {\bf 0.35}\\
    \hline
  \end{tabular}
\end{table}

Table~\ref{tab:real} shows results yielded by diametrical
clustering and moW on the three different gene datasets. For each
dataset, we show results for two values of $K$. The
$H_{\text{avg}}$ values indicate that moW yields clusters having
approximately the same intra-cluster cohesiveness as diametrical.
However, moW attains better inter-cluster separation as it more
frequently leads to lower $S_{\text{avg}}$ values.

\section{Conclusions}
We studied the multivariate Watson distribution, a fundamental
tool for modelling axially symmetric data. We solved the difficult
nonlinear equations that arise in maximum-likelihood parameter
estimation. In high-dimensions these equations pose severe
numerical challenges. We derived tight two-sided bounds that led
to approximate solutions to these equations; we also showed our
solutions to be accurate.  We applied our results to
mixture-modelling with Watson distributions and consequently
uncovered a connection to the diametrical clustering algorithm
of~\citep{diametric}. Our experiments showed that for clustering
axially symmetric data, the additional modelling power offered by
mixtures of Watson distributions can lead to better clustering.
Further refinements to the clustering procedure, as well as other
applications of Watson mixtures in high-dimensional settings is
left as a task for the future.

\subsection*{Acknowledgements}
The first author thanks Prateek Jain for initial discussions
related to Watson distributions. The second author acknowledges
support of the Russian Basic Research Fund (grant 11-01-00038-a).

\section*{References}
{\small \setlength{\bibsep}{3pt}

}

\appendix

\section{Mathematical Details}

\label{app:math}
This appendix includes mathematical details supporting the technical material
of the main text. While many of the facts are classic knowledge, some might be
found only in specialised literature. Thus, we have erred on the side of
including too much rather than too little.

\subsection{Hypergeometric functions}
\label{app:hyper}
Hypergeometric functions provide one of the richest classes of functions in
analysis. Indeed, any series with ratio of neighbouring terms equal to a rational function of the summation index is a constant multiple of the \emph{generalised hypergeometric} function $\hyperpq{p}{q}$ defined
by the power-series
\begin{equation}
  \label{eq:25}
  \hyperpq{p}{q}(a_1,\dots,a_p; c_1,\dots,c_q; z) = \sum_{k \geq 0}
   \frac{\risingF{a_1}{k}\cdots\risingF{a_p}{k}}
   {\risingF{c_1}{k}\cdots\risingF{c_q}{k}} \frac{z^k}{k!},
\end{equation}
where $\risingF{a}{k} = a (a+1)\dots(a+k-1)$ is the \emph{rising factorial} (often also denoted by the Pochhammer symbol $(a)_k$). Hypergeometric functions arise naturally
as solutions to certain differential equations; for a gentle introduction to
hypergeometric functions we refer the reader to~\citep{gkp}, while for a more
advanced treatment the reader may find~\citep{AAR} valuable.

In this paper, we restrict our attention to Kummer's confluent hypergeometric
function: $\hyperpq{1}{1}$, which is also denoted as $\kummer$. Moreover, we
limit our attention to the case of real valued arguments.

\vspace*{0pt}
\subsubsection{Some useful identities for $\kummer(a,c,x)$}
We list below some identities for $M$ that we will need for our analysis. To
ease the notational burden, we also use the shorthand $M_i \equiv M(a+i, c+i,
x)$; e.g.,~ $M_0\equiv M(a,c,x)$.
\begin{equation}
  \label{eq:deriv}
  \frac{d^n}{dx^n}M_0 = \frac{\risingF{a}{n}}{\risingF{c}{n}} M_n.
\end{equation}
\begin{equation}
  \label{eq:56}
  M_1 = \frac{c(1-c+x)}{ax}M_0 + \frac{c(c-1)}{ax}M_{-1}.
\end{equation}
\begin{equation}
  \label{eq:4}
  (c-a)M(a+1,c+2,x)=(c+1)M_1 - (a+1)M_2
\end{equation}
\begin{equation}
  \label{eq:11}
  (c-a)x M(a+2,c+3,x) = (c+1)(c+2)[M_2- M_1];
\end{equation}
\begin{equation}
  \label{eq:12}
  (a+1)xM_2 = (c+1)(x-c)M_1 + c(c+1)M_0
\end{equation}
\begin{equation}
  \label{eq:8}
  xM(a+2,c+3,x)= (c+2)[M_2-M(a+1,c+2,x)],
\end{equation}
Identity~\eqref{eq:deriv} follows inductively;~\eqref{eq:56} from~\citep[16.1.9c]{cuyt}; ~\eqref{eq:4} from~\citep[formula 6.4(4)]{htf}; ~\eqref{eq:11} on combining~\citep[formula 6.4(5)]{htf} with~\citep[formula 6.4(4)]{htf}; \eqref{eq:12} from~\eqref{eq:56} by replacing $c \to c+1$, $a\to a+1$;
and \eqref{eq:8} from~\citep[formula 6.4(5)]{htf}.

Now we build on the above identities to introduce a technical but crucial lemma.
\begin{lemma}
  \label{lemm:recurrence}
  The following identity holds for the Kummer function:
  \begin{multline}
   \label{eq:1}
   M_1^2 - M_2M_0 =\frac{(c-a)x}{c+1}\biggl[\frac{1}{c+1}M(a+1,c+2,x)^2-\frac{1}{c+2}M(a+2,c+3,x)M(a,c+1,x)\\
   + \frac{1}{c(c+1)}M(a+1,c+2,x)M(a+2,c+2,x)\biggr].
  \end{multline}
\end{lemma}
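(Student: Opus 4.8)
The plan is to reduce the left-hand side $M_1^2 - M_2 M_0$ to the stated combination of ``shifted'' Kummer functions by a sequence of applications of the contiguous relations~\eqref{eq:4}, \eqref{eq:11}, \eqref{eq:12}, and \eqref{eq:8}. The starting observation is that the quantity $M_1^2 - M_2 M_0$ is a $2\times 2$ ``Turánian'' determinant in the parameter-shift index, and the classical strategy for such determinants is to express one of its entries (here the off-diagonal-generating pair $M_2, M_0$, or equivalently $M_2$) in terms of functions with a \emph{larger} lower parameter, so that a common factor $(c-a)x/(c+1)$ can be pulled out. Concretely, I would first use~\eqref{eq:12} to write $x M_2$ purely in terms of $M_1$ and $M_0$; substituting this into $x(M_1^2 - M_2 M_0)$ turns the left side into a quadratic expression $\alpha M_1^2 + \beta M_1 M_0 + \gamma M_0^2$ with explicit rational coefficients in $a,c,x$. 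This is the cleanest way to clear the denominators hiding inside the determinant.

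Next I would attack the right-hand side from the other direction: rewrite each of $M(a+1,c+2,x)$, $M(a+2,c+3,x)$, $M(a,c+1,x)$, and $M(a+2,c+2,x)$ in terms of the base functions $M_0 = M(a,c,x)$ and $M_1 = M(a+1,c+1,x)$ (and $M_2$, to be eliminated via~\eqref{eq:12} again). For $M(a+1,c+2,x)$ use~\eqref{eq:4}; for $x M(a+2,c+3,x)$ use~\eqref{eq:11} combined with~\eqref{eq:8}; for $M(a,c+1,x)$ apply~\eqref{eq:56} with the shift $c \to c+1$ (equivalently the relation used to get~\eqref{eq:12}), expressing it through $M_1$ and $M_0$; and $M(a+2,c+2,x)$ is just $M''$-type data, reachable from~\eqref{eq:12} after a parameter shift, or directly via~\eqref{eq:63} from the main text. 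After these substitutions the bracketed right-hand side, multiplied through by $x$, also becomes a quadratic form $\tilde\alpha M_1^2 + \tilde\beta M_1 M_0 + \tilde\gamma M_0^2$ in the same two base functions.

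The proof is then completed by checking that, after multiplying the claimed identity by $x(c+1)/(c-a)$, the two quadratic forms coincide coefficient-by-coefficient: $\alpha = \tilde\alpha$, $\beta = \tilde\beta$, $\gamma = \tilde\gamma$. Since $M_0$ and $M_1$ are linearly independent as functions of $x$ (they solve independent confluent hypergeometric equations with distinct parameters, so $M_0, M_1$ and also $M_0^2, M_0 M_1, M_1^2$ are linearly independent over $\reals(x)$ for generic $a,c$), matching these three rational-function coefficients is equivalent to the identity itself, and then it extends to all admissible $a,c$ by continuity. The main obstacle is purely bookkeeping: the coefficient algebra involves degree-two polynomials in $x$ with denominators such as $a$, $c$, $c+1$, $c+2$, $a+1$, and one must be careful that every contiguous relation is applied with the correct parameter substitution — a single misindexed shift (e.g.\ confusing $M(a+1,c+2,x)$ with $M(a+2,c+2,x)$) derails the cancellation. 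I would therefore organize the computation by first normalizing everything to the basis $\{M_0, M_1\}$ and only at the very end clearing denominators and comparing, rather than manipulating the mixed-parameter forms directly.
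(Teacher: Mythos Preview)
Your plan is sound and will work: reducing both sides to quadratic forms in the two-function basis $\{M_0,M_1\}$ via the contiguous relations and then matching the three coefficients is a perfectly valid proof, and your linear-independence remark is the right justification for why coefficient matching suffices. (One small correction: for $M(a,c+1,x)$ the cleanest route is the contiguous relation $(c-a)M(a,c+1,x)=cM_0-aM_1$, not the shifted~\eqref{eq:56}, which would introduce $M(a-1,c,x)$.)

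The paper takes the same overall strategy---rewrite everything via contiguous relations---but organizes the algebra differently. Instead of eliminating $M_2$ up front and comparing coefficients in $\{M_0,M_1\}$, it expands only the right-hand bracket (using~\eqref{eq:4} and~\eqref{eq:11}) into an expression in $\{M_0,M_1,M_2\}$, forms the difference $\text{lhs}-\text{rhs}$, and then \emph{factors} that difference as
\[
\frac{cM_1-aM_2}{c(c+1)(c-a)}\bigl(x(a+1)M_2-(c+1)(x-c)M_1-c(c+1)M_0\bigr),
\]
whose second factor vanishes identically by~\eqref{eq:12}. This buys a shorter computation: no rational functions in $x$ appear as coefficients, and the single application of~\eqref{eq:12} at the very end replaces the three separate coefficient checks you propose. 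Your approach, by contrast, is more systematic and requires no clever factorization, at the cost of heavier intermediate algebra.
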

\begin{proof}
  Application of~\eqref{eq:4} and~\eqref{eq:11} yields after collecting terms
  \begin{multline}
    \frac{1}{c+1}M(a+1,c+2,x)^2 - \frac{1}{c+2}M(a+2,c+3,x)M(a,c+1,x)
    + \frac{1}{c(c+1)}M(a+1,c+2,x)M_2\\=\frac{a(a+1)}{c(c-a)^2}M_2^2
    -\frac{c(c+1)}{(c-a)^2x}M_2M_0 - \frac{(c+1)(a-x)}{(c-a)^2x}M_1^2\\
    + \frac{c(c+1)}{(c-a)^2x}M_1M_0 + \frac{ac(c+1) -x(a+c+2ac)}{(c-a)^2cx}M_2M_1.
  \end{multline}
  Application of this formula allows us to write the difference between the left-hand and right-hand sides of~\eqref{eq:1} as
  \begin{equation*}
    \text{lhs}-\text{rhs} =
    \frac{cM_1-aM_2}{c(c+1)(c-a)}\bigl(x(a+1)M_2 - (c+1)(x-c)M_1 -
    c(c+1)M_0\bigr) = 0,
  \end{equation*}
  where the equality to $0$ is due to~\eqref{eq:12}.~~$\square$
\end{proof}

\subsection{The Kummer ratio}

The central object of study in this paper is the
\emph{Kummer-ratio}:
\begin{equation}
  \label{eq:52}
  g(x)=g(a, c; x) := \frac{M'(a,c,x)}{M(a,c,x)} = \frac{a}{c}\frac{M(a+1,c+1,x)}{M(a,c,x)}.
\end{equation}
This ratio satisfies many fascinating properties; but of
necessity, we must content ourselves with only the essential
properties. In particular, our analysis focuses on the following:
(i) proving that $g$ is monotonic, and thereby invertible; and
(ii) obtaining bounds on the root of $g(x) - r = 0$. In the
sequel, it will be useful to use the slightly more general
function
\begin{equation} \label{eq.1}
 f_\mu(x) := \frac{M(a+\mu, c+\mu, x)}{M(a,c,x)},\qquad \mu>0,
\end{equation}
so that $g(x) = (a/c)f_1(x)$. Before proving monotonicity of $g$,
we derive two useful lemmas.
\begin{lemma}[Log-convexity]
  \label{lemm:logcvx}
  Let $c > a > 0$ and $x \ge 0$. Then the function
  \begin{align*}
    \mu\mapsto\frac{\Gamma(a+\mu)}{\Gamma(c+\mu)}\kummer(a+\mu, c+\mu, x)
    &=
    \sum\limits_{k=0}^{\infty}\frac{\Gamma(a+\mu+k)}{\Gamma(c+\mu+k)}\frac{x^k}{k!}
    =: h_{a,c}(\mu;x)\\
  \end{align*}
  is strictly log-convex on $[0,\infty)$ (note that $h$ is a function of
  $\mu$).
\end{lemma}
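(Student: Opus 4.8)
The plan is to prove strict log-convexity of $h_{a,c}(\mu;x)$ in $\mu$ on $[0,\infty)$ by exhibiting it as a suitable integral/series with log-convex ``ingredients'' and then applying a determinantal (Cauchy--Schwarz type) criterion. Recall that a positive function $h(\mu)$ is log-convex iff $h(\mu_1)h(\mu_2)\ge h(\tfrac{\mu_1+\mu_2}{2})^2$ for all $\mu_1,\mu_2$, with equality excluded for strictness. The key structural fact I would use is the series representation already displayed in the statement,
\[
  h_{a,c}(\mu;x) = \sum_{k\ge 0} w_k(\mu)\,\frac{x^k}{k!},\qquad w_k(\mu):=\frac{\Gamma(a+\mu+k)}{\Gamma(c+\mu+k)},
\]
so the whole problem reduces to: (i) each coefficient $\mu\mapsto w_k(\mu)$ is log-convex, and (ii) log-convexity is preserved under the nonnegative combination $\sum_k w_k(\mu)x^k/k!$ with $x\ge 0$.

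First I would establish (i). The cleanest route is the Beta-integral representation
\[
  \frac{\Gamma(a+\mu+k)}{\Gamma(c+\mu+k)} = \frac{1}{\Gamma(c-a)}\int_0^1 t^{a+\mu+k-1}(1-t)^{c-a-1}\,dt,
\]
valid since $c>a>0$; here the integrand, as a function of $\mu$, is $t^{\mu}$ times a nonnegative weight, and $\mu\mapsto t^\mu$ is log-convex (indeed log-\emph{linear}) for each fixed $t\in(0,1)$. Log-convexity is preserved under integration against a nonnegative measure (this is the integral form of the Cauchy--Schwarz / Hölder inequality: $\int t^{\mu_1}d\nu\cdot\int t^{\mu_2}d\nu\ge(\int t^{(\mu_1+\mu_2)/2}d\nu)^2$), so each $w_k$ is log-convex. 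Alternatively one can note $\frac{d^2}{d\mu^2}\log w_k(\mu)=\psi'(a+\mu+k)-\psi'(c+\mu+k)>0$ since $\psi'$ (the trigamma function) is strictly decreasing and $a+\mu+k<c+\mu+k$; this directly gives strict log-convexity of each coefficient.

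Next, (ii): a finite or convergent sum of log-convex functions is log-convex. This is again Cauchy--Schwarz: if $w_k(\mu_1)w_k(\mu_2)\ge w_k(\bar\mu)^2$ for $\bar\mu=(\mu_1+\mu_2)/2$, then writing $a_k=\sqrt{w_k(\mu_1)}\,\sqrt{x^k/k!}$, $b_k=\sqrt{w_k(\mu_2)}\,\sqrt{x^k/k!}$ we get $\sum a_k^2\sum b_k^2\ge(\sum a_kb_k)^2\ge(\sum w_k(\bar\mu)x^k/k!)^2$, i.e. $h(\mu_1)h(\mu_2)\ge h(\bar\mu)^2$. For \emph{strictness} I would observe that even the $k=0$ term alone, $w_0(\mu)=\Gamma(a+\mu)/\Gamma(c+\mu)$, is strictly log-convex by the trigamma argument above, and for $x=0$ the function $h_{a,c}(\mu;0)=w_0(\mu)$, so strict log-convexity holds there; for $x>0$ the full sum inherits strict inequality because equality in Cauchy--Schwarz would force the sequences $(a_k)$ and $(b_k)$ to be proportional, which (given $x>0$, so infinitely many terms are positive) would force $w_k(\mu_1)/w_k(\mu_2)$ to be constant in $k$ — contradicting that $w_k(\mu_1)/w_k(\mu_2)$ is strictly monotone in $k$ (equivalently, $w_{k+1}(\mu)/w_k(\mu)=(a+\mu+k)/(c+\mu+k)$ depends genuinely on $\mu$).

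The main obstacle is getting strictness cleanly rather than just weak log-convexity; the trigamma computation $\big(\log w_k\big)''(\mu)=\psi'(a+\mu+k)-\psi'(c+\mu+k)>0$ is the most direct fix and I would lean on it, invoking strict monotonicity of the trigamma function $\psi'$ on $(0,\infty)$. (This fact, and the preservation of log-convexity under positive linear combinations, are the kind of classical results the appendix is marshalling — cf.\ the log-convexity/log-concavity machinery of~\citep{KS}.) Everything else is routine bookkeeping with the Beta integral and Cauchy--Schwarz; no delicate analytic estimates are needed, only termwise positivity for $x\ge 0$ and the standing hypothesis $c>a>0$.
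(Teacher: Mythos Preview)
Your proposal is correct and follows essentially the same route as the paper: decompose $h_{a,c}(\mu;x)$ as the power series $\sum_k w_k(\mu)x^k/k!$, show each coefficient $w_k(\mu)=\Gamma(a+\mu+k)/\Gamma(c+\mu+k)$ is (strictly) log-convex via the trigamma identity $(\log w_k)''(\mu)=\psi'(a+\mu+k)-\psi'(c+\mu+k)>0$, and then invoke closure of log-convexity under nonnegative sums. You are a bit more careful than the paper about the strictness clause and you offer the Beta-integral route as an alternative, but the core argument is the same.
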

\begin{proof}
  Write the power-series expansion in $x$ for $h_{a,c}(\mu; x)$ as
  \[
  h_{a,c}(\mu;x)=\sum\limits_{k=0}^{\infty}h_k(a,c,\mu)\frac{x^k}{k!},~~~h_k(a,c,\mu)=\frac{\Gamma(a+\mu+k)}{\Gamma(c+\mu+k)}.
  \]
  Since log-convexity is additive it is sufficient to
  prove that $\mu \mapsto {h_k(a,c,\mu)}$ is log-convex. For this we
  compute the second-derivative
  \[
  \frac{\partial^2}{\partial\mu^2}\log{h_k(a,c,\mu)}=\psi'(a+\mu+k)-\psi'(c+\mu+k),
  \]
  where $\psi$ is the logarithmic derivative of the gamma function.  We
  need to show that this expression is positive when $c>a>0$,
  $k\geq{0}$ and $\mu\geq{0}$.  According to the Gauss formula~\citep[Theorem~1.6.1]{AAR}
  \[
  \psi(x)=\int\limits_{0}^{\infty}\left(\frac{e^{-t}}{t}-\frac{e^{-tx}}{1-e^{-t}}\right)dt,
  \]
  so that
  \[
  \psi''(x)=-\int\limits_{0}^{\infty}\frac{t^2e^{-tx}}{1-e^{-t}}dt<0.
  \]
  Hence the function $\psi'(x)$ is decreasing and our claim follows.\hfill$\square$
\end{proof}
\begin{lemma}
  \label{lemm:lcvx2}
  Let $c > a > 0$, and $x \ge 0$. Then the function
  \begin{equation*}
   \mu\mapsto\frac{\Gamma(a+\mu)}{\Gamma(c+\mu)}\kummer(c-a, c+\mu, x)=:\hat{h}_{a,c}(\mu;x)
  \end{equation*}
  is strictly log-convex on $[0,\infty)$.
\end{lemma}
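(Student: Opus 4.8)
The plan is to mimic the proof of Lemma~\ref{lemm:logcvx}: expand $\hat h_{a,c}(\mu;x)$ as a power series in $x$, check that each coefficient is a log-convex function of $\mu$, and then invoke additivity of log-convexity. The key preliminary step is to rewrite the Pochhammer ratio so that all the $\mu$-dependence is collected into a single clean ratio of Gamma functions. Using $\risingF{(c-a)}{k}=\Gamma(c-a+k)/\Gamma(c-a)$ and $\risingF{(c+\mu)}{k}=\Gamma(c+\mu+k)/\Gamma(c+\mu)$, one gets
$$
\hat h_{a,c}(\mu;x)=\frac{\Gamma(a+\mu)}{\Gamma(c+\mu)}\sum_{k\ge0}\frac{\risingF{(c-a)}{k}}{\risingF{(c+\mu)}{k}}\frac{x^k}{k!}=\sum_{k\ge0}\frac{\Gamma(c-a+k)}{\Gamma(c-a)}\,\frac{\Gamma(a+\mu)}{\Gamma(c+\mu+k)}\,\frac{x^k}{k!}.
$$
Since $c-a>0$ and $x\ge0$, every coefficient $\Gamma(c-a+k)x^k/(\Gamma(c-a)\,k!)$ is nonnegative, and the $k=0$ coefficient equals $1$. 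Hence it suffices to prove that $\mu\mapsto\hat h_k(\mu):=\Gamma(a+\mu)/\Gamma(c+\mu+k)$ is (strictly) log-convex on $[0,\infty)$ for every $k\ge0$.

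For this I would compute, exactly as in Lemma~\ref{lemm:logcvx},
$$
\frac{\partial^2}{\partial\mu^2}\log\hat h_k(\mu)=\psi'(a+\mu)-\psi'(c+k+\mu),
$$
and then quote from the proof of Lemma~\ref{lemm:logcvx} the representation $\psi''(x)=-\int_0^\infty \frac{t^2e^{-tx}}{1-e^{-t}}\,dt<0$ for $x>0$, so that $\psi'$ is strictly decreasing on $(0,\infty)$. Because $a+\mu>0$ and $a+\mu<c+k+\mu$ strictly (as $c>a$ and $k\ge0$), this forces $\psi'(a+\mu)-\psi'(c+k+\mu)>0$, i.e.\ each $\hat h_k$ is strictly log-convex. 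Additivity of log-convexity (a sum of log-convex functions with nonnegative coefficients is log-convex, and strictly so once a summand with positive coefficient is strictly log-convex) then yields the claim; the $k=0$ term is always present with coefficient $1$, so the conclusion remains strict even at $x=0$, consistent with the stated domain $[0,\infty)$.

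There is essentially no hard step here: this is a routine adaptation of Lemma~\ref{lemm:logcvx}. The only points that need a little care are the Gamma-function bookkeeping in the rearrangement of the Pochhammer symbols---note in particular that $c-a$ need not be an integer, so $\Gamma(c-a)$ is simply a fixed positive constant---and the remark that the rearranged series retains a strictly positive leading coefficient, which is what guarantees strict (rather than merely non-strict) log-convexity. The monotonicity of $\psi'$ can be cited directly from the computation already carried out in the proof of Lemma~\ref{lemm:logcvx} rather than repeated.
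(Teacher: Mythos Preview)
Your argument is correct, but it takes a different route from the paper. The paper factors $\hat h_{a,c}(\mu;x)$ as a product of two pieces: it first observes (by the same $\psi'$ computation) that $\mu\mapsto\Gamma(a+\mu)/\Gamma(c+\mu)$ is strictly log-convex, then cites the literature (Karp--Sitnik) for the log-convexity of $\mu\mapsto M(c-a,c+\mu,x)$, and finishes by multiplicativity of log-convexity. You instead pull the prefactor inside the series, absorb the $\Gamma(c+\mu)$ into the denominator of each term, and reduce everything to the single family $\mu\mapsto\Gamma(a+\mu)/\Gamma(c+\mu+k)$, whose strict log-convexity follows from the same $\psi'$ monotonicity already established in Lemma~\ref{lemm:logcvx}; additivity then gives the result.

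What your approach buys is self-containment: you do not need the external result on $\mu\mapsto M(c-a,c+\mu,x)$, and the argument becomes a direct transcription of Lemma~\ref{lemm:logcvx} with $\Gamma(a+\mu+k)/\Gamma(c+\mu+k)$ replaced by $\Gamma(a+\mu)/\Gamma(c+\mu+k)$. What the paper's approach buys is brevity, at the cost of a citation. Both are short; yours is the more elementary.
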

\begin{proof}
  Using precisely the same argument as in the proof of
  Lemma~\ref{lemm:logcvx} we see that $\mu\mapsto\Gamma(a+\mu)/\Gamma(c+\mu)$
  is log-convex.  Next, the log-convexity of $\mu\mapsto \kummer(c-a;c+\mu;x)$
  has been proved by many authors (see, for instance,  \cite{KS} and
  references therein). Thus multiplicativity of log-convexity completes the
  proof.\hfill$\square$
\end{proof}
With these two lemmas in hand we are ready to prove the first main theorem.
\begin{theorem}[Monotonicity]
  \label{thm:monotone}
  Let $c > a > 0$. The function $x \mapsto f_\mu(x)$ is monotone increasing on
  $(-\infty,\infty)$, with $f_\mu(-\infty)=0$ and $f_\mu(\infty) = \Gamma(c+\mu)\Gamma(a)/\bigl(\Gamma(c)\Gamma(a+\mu)\bigr)$.
\end{theorem}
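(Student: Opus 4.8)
The plan is to establish strict monotonicity of $\fmu$ by showing $\tfrac{d}{dx}\log\fmu(x)>0$ for every $x\in\reals$; this is legitimate because, for $c>a>0$ and $\mu>0$, both $M(a+\mu,c+\mu,x)$ and $M(a,c,x)$ are strictly positive for all real $x$ (e.g.\ from the Euler integral $M(\alpha,\gamma,x)=\frac{\Gamma(\gamma)}{\Gamma(\alpha)\Gamma(\gamma-\alpha)}\int_0^1 e^{xt}t^{\alpha-1}(1-t)^{\gamma-\alpha-1}\,dt$, valid when $\gamma>\alpha>0$). Since the log-convexity-in-$\mu$ statements of Lemmas~\ref{lemm:logcvx} and~\ref{lemm:lcvx2} are only available for nonnegative argument, I would split the range of $x$ at the origin: treat $x\ge0$ through Lemma~\ref{lemm:logcvx}, treat $x\le0$ through Lemma~\ref{lemm:lcvx2} after a Kummer transformation, and finally read the two limits off the classical confluent asymptotics.

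For $x\ge0$: differentiating the series $h_{a,c}(\mu;x)=\sum_{k\ge0}\frac{\Gamma(a+\mu+k)}{\Gamma(c+\mu+k)}\frac{x^k}{k!}$ term by term gives the shift identity $\partial_x h_{a,c}(\mu;x)=h_{a,c}(\mu+1;x)$, hence $\partial_x\log h_{a,c}(\mu;x)=h_{a,c}(\mu+1;x)/h_{a,c}(\mu;x)$. Strict log-convexity of $\mu\mapsto\log h_{a,c}(\mu;x)$ (Lemma~\ref{lemm:logcvx}) makes $\mu\mapsto h_{a,c}(\mu+1;x)/h_{a,c}(\mu;x)$ strictly increasing. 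Because $\fmu(x)=\frac{\Gamma(c+\mu)\Gamma(a)}{\Gamma(a+\mu)\Gamma(c)}\,h_{a,c}(\mu;x)/h_{a,c}(0;x)$ with an $x$-independent prefactor,
\[
\partial_x\log\fmu(x)=\frac{h_{a,c}(\mu+1;x)}{h_{a,c}(\mu;x)}-\frac{h_{a,c}(1;x)}{h_{a,c}(0;x)}>0,\qquad\mu>0.
\]

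For $x\le0$: Kummer's transformation $M(\alpha,\gamma,x)=e^x M(\gamma-\alpha,\gamma,-x)$ applied to numerator and denominator yields $\fmu(x)=M(c-a,c+\mu,-x)/M(c-a,c,-x)$ with $-x\ge0$, so the relevant quantity is $\hat h_{a,c}(\mu;y)=\frac{\Gamma(a+\mu)}{\Gamma(c+\mu)}M(c-a,c+\mu,y)$ with $y=-x\ge0$, exactly the function of Lemma~\ref{lemm:lcvx2}. Using the contiguous relation $\frac{d}{dx}M(\alpha,\gamma,x)=M(\alpha,\gamma,x)-\frac{\gamma-\alpha}{\gamma}M(\alpha,\gamma+1,x)$ with $(\alpha,\gamma)=(c-a,c+\mu)$, the Gamma prefactors telescope to the analogue $\partial_y\hat h_{a,c}(\mu;y)=\hat h_{a,c}(\mu;y)-\hat h_{a,c}(\mu+1;y)$, whence $\partial_y\log\hat h_{a,c}(\mu;y)=1-\hat h_{a,c}(\mu+1;y)/\hat h_{a,c}(\mu;y)$ is strictly \emph{decreasing} in $\mu$ by Lemma~\ref{lemm:lcvx2}. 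Writing $\fmu(x)=\frac{\Gamma(c+\mu)\Gamma(a)}{\Gamma(a+\mu)\Gamma(c)}\,\hat h_{a,c}(\mu;-x)/\hat h_{a,c}(0;-x)$ and differentiating in $x$ flips the sign, giving $\partial_x\log\fmu(x)>0$ for $\mu>0$ on $(-\infty,0]$; combined with the previous case and continuity at $0$, $\fmu$ is strictly increasing on $\reals$.

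Finally, the endpoint values come from $M(\alpha,\gamma,x)\sim\frac{\Gamma(\gamma)}{\Gamma(\alpha)}e^x x^{\alpha-\gamma}$ as $x\to+\infty$: since the shift by $\mu$ preserves $\alpha-\gamma=a-c$, the $e^x x^{a-c}$ factors in numerator and denominator cancel and $\fmu(+\infty)=\frac{\Gamma(c+\mu)\Gamma(a)}{\Gamma(a+\mu)\Gamma(c)}$; applying the same asymptotic to the Kummer-flipped form as $-x\to+\infty$ gives $\fmu(x)\sim\frac{\Gamma(c+\mu)}{\Gamma(c)}(-x)^{-\mu}\to0$, so $\fmu(-\infty)=0$. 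I expect the $x<0$ regime to be the main obstacle: the ``a sum of log-convex functions is log-convex'' reasoning behind both lemmas needs nonnegative series coefficients and genuinely breaks for negative argument, which is what forces the Kummer-transformation detour together with the precisely matched contiguous identity making the Gamma factors in $\hat h_{a,c}$ telescope. (An alternative handling all $x$ at once: from the Euler integral, $\fmu(x)=\frac{\Gamma(c+\mu)\Gamma(a)}{\Gamma(a+\mu)\Gamma(c)}\,\mathbb{E}_{\nu_x}[t^\mu]$ for the exponential-family density $d\nu_x\propto e^{xt}t^{a-1}(1-t)^{c-a-1}\,dt$ on $[0,1]$, so $\partial_x\mathbb{E}_{\nu_x}[t^\mu]=\mathrm{Cov}_{\nu_x}(t^\mu,t)>0$ by Chebyshev's integral inequality, since $t\mapsto t^\mu$ and $t\mapsto t$ are both increasing; the limits then follow from $\nu_x$ concentrating at $t=0$ as $x\to-\infty$ and at $t=1$ as $x\to+\infty$.)
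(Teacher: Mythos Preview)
Your main argument is correct and is essentially the paper's own proof: split at $x=0$, use the log-convexity of $h_{a,c}(\mu;x)$ from Lemma~\ref{lemm:logcvx} for $x\ge 0$, apply the Kummer transformation together with the log-convexity of $\hat h_{a,c}(\mu;y)$ from Lemma~\ref{lemm:lcvx2} for $x<0$, and read off the limits from the standard asymptotics---the only difference is packaging (you work with $\partial_x\log f_\mu$ via the shift identities $\partial_x h_{a,c}(\mu;x)=h_{a,c}(\mu+1;x)$ and $\partial_y\hat h_{a,c}(\mu;y)=\hat h_{a,c}(\mu;y)-\hat h_{a,c}(\mu+1;y)$, whereas the paper computes $M_0^2 f_\mu'(x)$ directly), but the decisive inequality $h(\mu+1)/h(\mu)>h(1)/h(0)$ is identical in both. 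Your parenthetical alternative via the Euler integral and Chebyshev's covariance inequality is a genuinely different and self-contained route that the paper does not take; it handles all real $x$ uniformly without the case split or the two log-convexity lemmas, at the cost of invoking the integral representation rather than purely series manipulations.
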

\begin{proof}
  We divide the proof into two cases: (i) $x \ge 0$, and (ii) $x < 0$.

  \emph{Case i:} Let $x \ge 0$. It follows from~(\ref{eq:deriv}) that
  \begin{equation*}
    \frac{d}{dx}M(a,c,x) = \frac{a}{c}\kummer(a+1,c+1,x),
  \end{equation*}
  whereby (using our compact notation) we have
  \begin{equation*}
    M_0^2 \fmu'(x) = \frac{a+\mu}{c+\mu}M_{\mu+1}M_0 - \frac{a}{c}M_\mu M_1.
  \end{equation*}
  We need to show that the above expression is positive, which amounts to showing
  \begin{equation}
    \label{eq:13}
    \frac{a+\mu}{c+\mu}\frac{M_{\mu+1}}{M_\mu} > \frac{a}{c}\frac{M_1}{M_0},
  \end{equation}
  or equivalently
  \begin{equation*}
    \frac{[\Gamma(a+\mu+1)/\Gamma(c+\mu+1)]M_{\mu+1}}{[\Gamma(a+\mu)/\Gamma(c+\mu)]M_\mu}
    >
    \frac{[\Gamma(a+1)/\Gamma(c+1)]M_1}{[\Gamma(a)/\Gamma(c)]M_0}.
  \end{equation*}
  The last inequality follows from Lemma~\ref{lemm:logcvx}. To see how, recall
  that if $\mu \mapsto h(\mu)$ is log-convex, then the function $\mu \mapsto
  h(\mu+\delta)/h(\mu)$ is increasing\footnote{Easily verified by noting that
    when $h$ is log-convex, its logarithmic derivative $h'(\mu)/h(\mu)$ is increasing,
    which immediately implies that the derivative of $h(\mu+\delta)/h(\mu)$ is
    positive.} for each fixed $\delta > 0$. Thus, in particular applying this
  property to $h_{a,c}(\mu; x)$ with $\delta = 1$ we have
  \begin{equation*}
    \frac{h_{a,c}(\mu+1; x)}{h_{a,c}(\mu; x)} > \frac{h_{a,c}(1;x)}{h_{a,c}(0;x)},
  \end{equation*}
  which is precisely the required inequality. This establishes the
  monotonicity. The value of $\fmu(\infty)$ follows from the
  asymptotic formula~\citep[Corollary~4.2.3]{AAR}:
  \begin{equation}\label{eq:kummer.asym}
    M(a,c,x) \sim  \frac{\Gamma(c)}{\Gamma(a)}\frac{e^x}{x^{c-a}}{_2F_0}(c-a,1-a;-;1/x),~~x\to\infty.
  \end{equation}

  \emph{Case ii:} Let $x < 0$. Like in Case~(i) we need to show that
  \begin{equation}
    \label{eq:rr1}
    \frac{[\Gamma(a+\mu+1)/\Gamma(c+\mu+1)]M_{\mu+1}}{[\Gamma(a+\mu)/\Gamma(c+\mu)]M_\mu} >
    \frac{[\Gamma(a+1)/\Gamma(c+1)]M_1}{[\Gamma(a)/\Gamma(c)]M_0}.
\end{equation}
but this time for $x<0$. Apply the Kummer transformation $\kummer(a;c;x)=e^{x}M(c-a;c;-x)$ and write $y=-x>0$ to get
\[
\frac{[\Gamma(a+\mu+1)/\Gamma(c+\mu+1)]M(c-a;c+\mu+1;y)}{[\Gamma(a+\mu)/\Gamma(c+\mu)]M(c-a;c+\mu;y)}>
\frac{[\Gamma(a+1)/\Gamma(c+1)]M(c-a;c+1;y)}{[\Gamma(a)/\Gamma(c)]M(c-a;c;y)}.
\]
Using the notation introduced in Lemma~\ref{lemm:lcvx2} the last inequality becomes
\[
\frac{\hat{h}_{a,c}(\mu+1;x)}{\hat{h}_{a,c}(\mu;x)}>\frac{\hat{h}_{a,c}(1;x)}{\hat{h}_{a,c}(0;x)},
\]
which holds as a consequence of the log-convexity of $\mu\mapsto\hat{h}_{a,c}(\mu;x)$.  Finally from the Kummer transformation
and formula~(\ref{eq:kummer.asym}) we have
\begin{equation}\label{eq:fmuneg.asymp}
f_{\mu}(x)=\frac{M(c-a;c+\mu;-x)}{M(c-a;c;-x)}\sim
\frac{\Gamma(c+\mu)}{\Gamma(c)}\frac{1}{(-x)^{\mu}}
\frac{{_2F_0}(a+\mu,1+a-c;-;-1/x)}{{_2F_0}(a,1+a-c;-;-1/x)}\to{0}~\text{as}~x\to-\infty.~\square
\end{equation}
\end{proof}

\subsection{Bounds on $\kappa$}
\label{app:bounds}
We now derive bounds on $\kappa(r)$, the solution to the equation $g(a,
c;\kappa)=r$. For ease of exposition we divide the bounds into two cases: (i) positive $\kappa(r)$ and (ii) negative $\kappa(r)$.

\begin{theorem}[Positive $\kappa$]\label{thm.poskappa}
  \label{thm:poskappa}
  Let $\kappa(r)$ be the solution to~(\ref{eq:61}); $c > a > 0$, and $r \in (a/c,1)$. Then, we have the bounds
  \begin{equation}
    \label{eq:xboundspos}
    \frac{rc-a}{r(1-r)}\left(1+ \frac{1-r}{c-a}\right) < \kappa(r) <
    \frac{rc-a}{2r(1-r)}\left(1 + \sqrt{1+
    \frac{4(c+1)r(1-r)}{a(c-a)}}\right)<\frac{rc-a}{r(1-r)}\left(1+\frac{r}{a}\right).
  \end{equation}
\end{theorem}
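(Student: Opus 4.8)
The plan is to reduce the two–sided bound \eqref{eq:xboundspos} to a pair of Tur\'an-type inequalities for the Kummer function, by way of the Riccati-type relation \eqref{eq:41} for $g$; throughout set $x=\kappa$ and, as in the appendix, $M_i\equiv M(a+i,c+i,x)$. By Theorem~\ref{cor:exist} the solution $\kappa(r)$ exists, is unique, and satisfies $\kappa(r)>0$, $cr-a>0$ for $r\in(a/c,1)$. Evaluating \eqref{eq:41} at $\kappa=\kappa(r)$, where $g=r$, gives the key identity
\[
  \kappa(r)=\frac{cr-a}{\,r(1-r)-g'(a,c;\kappa(r))\,},\qquad\text{with}\qquad 0<g'(a,c;\kappa(r))<r(1-r).
\]
Since $t\mapsto(cr-a)/(r(1-r)-t)$ is increasing on $[0,r(1-r))$, two-sided bounds on $\kappa(r)$ are equivalent to two-sided bounds on $g'(a,c;\kappa(r))$ expressed through $r$. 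A one-line computation shows that the left inequality $L(r)<\kappa(r)$ is equivalent to $g'(a,c;\kappa(r))>r(1-r)^2/(c-a+1-r)$. For $\kappa(r)<B(r)$ I would first clear the radical: one checks that $B(r)$ is the unique positive root of $r(1-r)\kappa^2-(cr-a)\kappa-\tfrac{(c+1)(cr-a)^2}{a(c-a)}=0$, a parabola negative at $\kappa=0$, so $\kappa(r)<B(r)$ is equivalent to $r(1-r)\kappa(r)^2-(cr-a)\kappa(r)<\tfrac{(c+1)(cr-a)^2}{a(c-a)}$, whose left-hand side equals $\kappa(r)^2 g'(a,c;\kappa(r))$ by the identity above.

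Next I would turn these into inequalities purely about $M$. Substituting $g'=\tfrac{a(a+1)}{c(c+1)}M_2/M_0-(\tfrac ac M_1/M_0)^2$ (from \eqref{eq:62}, \eqref{eq:40} and $M_0'=\tfrac ac M_1$), $cr-a=a(M_1-M_0)/M_0$, and the contiguous identity $M_1-M_0=\tfrac{(c-a)x}{c(c+1)}M(a+1,c+2,x)$ (which follows from \eqref{eq:4} and \eqref{eq:12}), the bound for $B$ becomes, after removing the positive factor $ax^2/(c^2(c+1)M_0^2)$, the purely special-function statement
\[
  c(a+1)M_2M_0-a(c+1)M_1^2\;<\;(c-a)\,M(a+1,c+2,x)^2,\qquad x>0;
\]
writing the left side as $(c-a)M_1^2-c(a+1)(M_1^2-M_2M_0)$ turns this into $(c-a)(M_1^2-M(a+1,c+2,x)^2)<c(a+1)(M_1^2-M_2M_0)$. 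The inequality for $L$ becomes, via the same substitutions, an \emph{upper} estimate of the Tur\'an determinant $M_1^2-M_2M_0$ against $M_1$, $M_0$ and $M(a,c+1,x)$.

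Both of these are attacked through Lemma~\ref{lemm:recurrence}, which writes $M_1^2-M_2M_0$ as a positive multiple of a fixed linear combination of products of contiguous Kummer functions; bounding those factors by the Tur\'an/log-convexity inequalities coming from Lemmas~\ref{lemm:logcvx} and~\ref{lemm:lcvx2} (log-convexity of $h_{a,c}(\mu;\cdot)$ and $\hat h_{a,c}(\mu;\cdot)$, whence $\mu\mapsto h_{a,c}(\mu+\delta;x)/h_{a,c}(\mu;x)$ is increasing), together with the appendix's contiguous relations, yields the stated estimates. The remaining inequality $B(r)<U(r)$ is elementary: both sides are positive for $r>a/c$, and squaring the equivalent inequality $\sqrt{1+\tfrac{4(c+1)r(1-r)}{a(c-a)}}<1+\tfrac{2r}{a}$ reduces, after cancelling $4r/a$ and cross-multiplying, to $a<cr$.

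The main obstacle is the last step: proving the two Kummer inequalities with the \emph{exact} constants that reproduce $L(r)$ and $B(r)$, not merely some two-sided bounds. Lemma~\ref{lemm:recurrence} is the pivotal device, but it must be combined with the log-convexity/Tur\'an inequalities for the individual Kummer factors carefully enough that the resulting estimates stay asymptotically sharp as $r\to a/c$ and $r\to1$ --- by Table~\ref{tab.summ} they must agree with $\kappa(r)$ to second and third order at those endpoints, so there is essentially no slack to play with.
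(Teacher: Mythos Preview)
Your Riccati-based reduction is correct and is the paper's argument in different coordinates: since $g'=r(r_1-r)$ with $r_1=g(a{+}1,c{+}1;x)$, your identity $\kappa(r)=(cr-a)/(r(1-r)-g')$ is the paper's $\kappa(r)=(cr-a)/(r(1-r_1))$, and your lower-bound condition $g'>r(1-r)^2/(c-a+1-r)$ is literally the paper's condition $r_1>\tfrac{(c-a-1)r+1}{c-a+1-r}$. From there the paper does exactly what you sketch: rewrite in the $M_i$, collapse via Lemma~\ref{lemm:recurrence} to the right-hand side of \eqref{eq:21}, and recognise what remains as the monotonicity inequality \eqref{eq:13}. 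So for $L(r)<\kappa(r)$ you are on the paper's track, and your check $B(r)<U(r)$ is fine.

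The gap is the upper bound $\kappa(r)<B(r)$, precisely where you flag the obstacle. You propose to prove $c(a{+}1)M_2M_0-a(c{+}1)M_1^2<(c-a)M(a{+}1,c{+}2,x)^2$ directly from Lemma~\ref{lemm:recurrence} plus log-convexity; that would be a second sharp Tur\'an estimate independent of the first, and you correctly note there is no slack to absorb a crude bound. The paper avoids this entirely by a \emph{bootstrap}: the already-proved inequality $L<\kappa$, valid for all $c>a>0$, is applied with parameters $(a{+}1,c{+}1)$. Rewritten as a quadratic in $q=1-r$ and solved, this yields
\[
  1-r_1\;>\;\frac{(c-a)\bigl[(x+c)-\sqrt{(x+c)^2-4((c-a)x-c-1)}\bigr]}{2((c-a)x-c-1)}\,,
\]
and feeding this back into $x=(cr-a)/(r(1-r_1))$ and squaring produces, after cancellation, exactly the quadratic $a(c-a)r(1-r)x^{2}-a(c-a)(cr-a)x-(c+1)(cr-a)^{2}<0$ whose larger root is $B(r)$. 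So the exact constant that worried you is not a coincidence: your second Kummer inequality \emph{is} the first one at the next rung of the parameter ladder, and this bootstrap is the missing idea that closes your argument.
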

\begin{proof}
  \emph{Lower-bound.} To simplify notation we use $x=\kappa(r)$ below. Denote
  $r_1 = g(a+1,c+1; x)$. Now, replace $a \gets a+1$, $c \gets c+1$ and
  divide by $M_1$ in identity~(\ref{eq:56}) to obtain
  \begin{equation}
    \label{eq:2x}
    x = \frac{cr-a}{r(1-r_1)},
  \end{equation}
  where as before $r=g(a,c,x)$. To prove the lower bound in~\eqref{eq:xboundspos} we
  need to show that
  \begin{equation*}
    \frac{cr-a}{r(1-r_1)} > \frac{cr-a}{r(1-r)} + \frac{cr-a}{r(c-a)}.
  \end{equation*}
  Elementary calculation reveals that this inequality is equivalent to
  \begin{equation*}
    \frac{(c-a-1)r+1}{c-a+1-r} < r_1,
  \end{equation*}
  once we account for $cr-a > 0$ by our hypothesis. Plugging in the
  definitions of $r$ and $r_1$ we get:
  \begin{equation*}
    \frac{(c-a-1)aM_1 + cM_0}{(c-a+1)cM_0 - aM_1} < \frac{(a+1)M_2}{(c+1)M_1},
  \end{equation*}
  where as before we use $M_i = M(a+i,c+i,x)$. Cross-multiplying, we obtain
  \begin{equation*}
    h(x) := c(c-a+1)(a+1)M_2M_0 - (c+1)(c-a-1)aM_1^2 - c(c+1)M_1M_0 -
    a(a+1)M_2M_1 > 0.
  \end{equation*}
  Now on noticing that $c(c-a+1)(a+1)=ac(c-a)+c(c+1)$ and $(c-a-1)(c+1)a = ac(c-a)-a(a+1)$, we can regroup $h(x)$ to get
  \begin{equation*}
    h(x) = ac(c-a)[M_2M_0-M_1^2] + (M_2-M_1)[c(c+1)M_0-a(a+1)M_1].
  \end{equation*}
  Now identity~(\ref{eq:11}) yields
  \begin{equation*}
    M_2-M_1 = \frac{x(c-a)}{(c+1)(c+2)}M(a+2,c+3,x),
  \end{equation*}
  and a simple calculation shows that
  \begin{equation*}
  c(c+1)M_0-a(a+1)M_1 = \frac{ax(c-a)}{c+1}M(a+1,c+2,x) + (c-a)(c+a+1)M(a,c+1,x).
  \end{equation*}
  Substituting these formulae into $h(x)$ and using identity~(\ref{eq:1}) we
  get a rather complicated term
  \begin{multline}
    h(x) = -\frac{ac(c-a)^2x}{c+1}\biggl[ \frac{1}{c+1}M(a+1,c+2,x)^2 - \frac{1}{c+2}M(a+2,c+3,x)M(a,c+1,x)\\
    + \frac{1}{c(c+1)}M(a+1,c+2,x)M_2\biggr] + \frac{a(c-a)^2x^2}{(c+1)^2(c+2)}M(a+2,c+3,x)M(a+1,c+2,x)\\
    + \frac{(c-a)^2(c+a+1)x}{(c+1)(c+2)}M(a+2,c+3,x)M(a,c+1,x).
  \end{multline}
  However, on invoking the contiguous relation~\eqref{eq:8}, $h(x)$ can be
  simplified considerably to yield
  \begin{equation}
    \label{eq:21}
    \frac{(c+1)h(x)}{(c-a)^2x} = \frac{(a+1)(c+1)}{c+2}M(a,c+1,x)M(a+2,c+3,x)-aM(a+1,c+2,x)^2.
  \end{equation}
  Therefore, the condition $h(x) > 0$ is equivalent to (after introducing the notation $c' =
  c+1$)
  \begin{equation}
    \label{eq:29}
    \frac{c'}{c'+1}M(a,c',x)M(a+2,c'+2,x)-\frac{a}{a+1}M(a+1,c'+1,x)^2 > 0,
  \end{equation}
  or, in other words
  \begin{equation*}
    \frac{(a+1)}{(c'+1)}\frac{M(a+2,c'+2,x)}{M(a+1,c'+1,x)} > \frac{a}{c'}\frac{M(a+1,c'+1,x)}{M(a,c',x)}.
  \end{equation*}
  But this final inequality follows from Theorem~\ref{thm:monotone} by using $\mu=1$ in~(\ref{eq:13}).

  \emph{Upper-bound.} To prove the upper bound, first for brevity introduce
  the notation $b=c-a$, and $q=1-r$. The lower-bound in~(\ref{eq:xboundspos}) can be
  then rewritten as ($b-cq=cr-a > 0$)
  \begin{equation*}
    \frac{b-cq}{q(1-q)}\left(1+ \frac{q}{b}\right) < x,
  \end{equation*}
  which in turn can be rearranged to
  \begin{equation*}
    q^2(x-c/b) - q(x+c-1) + b < 0.
  \end{equation*}
  Note first that the equation $q^2(x-c/b)-q(x+c-1)+b=0$  has two distinct real roots since the discriminant (upon using $b=c-a$)
  \begin{equation*}
    D=(x+c-1)^2-4(c-a)(x-c/(c-a))=(1-x-c)^2+4ax+4c(1-x)=(1-x+c)^2+4ax>0.
  \end{equation*}
  We need to consider three cases:
  \begin{enumerate}[(1)]
  \item $x-c/b>0$ implies $q$ lies between the roots
    \begin{equation*}
      \frac{b(x+c-1)}{2(bx-c)}-\frac{b\sqrt{D}}{2(bx-c)} < q < \frac{b(x+c-1)}{2(bx-c)} + \frac{b\sqrt{D}}{2(bx-c)}.
    \end{equation*}
  \item $x-c/b<0$ implies $q$ is smaller than the smaller root or bigger than the bigger root, i.e.,
    \begin{equation*}
      q < \frac{b(x+c-1)}{2(bx-c)}+\frac{b\sqrt{D}}{2(bx-c)},\quad\text{or}\quad
      \frac{b(x+c-1)}{2(bx-c)}-\frac{b\sqrt{D}}{2(bx-c)}<q.
  \end{equation*}
  \item $x-c/b=0$ implies  $b/(x+c-1) < q$.
  \end{enumerate}
\noindent Since
\[
\lim\limits_{x\to{c/b}}\left(\frac{b(x+c-1)}{2(bx-c)}-\frac{b\sqrt{D}}{2(bx-c)}\right)=\frac{b}{x+c-1},
\]
in all three situations we have
\[
\frac{b(x+c-1)-b\sqrt{D}}{2(bx-c)} < q.
\]
Changing $a{\to}a+1$ and $c{\to}c+1$ here (recall that $b=c-a$) we
get
\begin{equation}
\label{eq:10}
0 < \frac{b(x+c)-b\sqrt{(x+c)^2-4(bx-c-1)}}{2(bx-c-1)} < 1-r_1,
\end{equation}
where as before $r_1=g(a+1,c+1,x)$. The positivity is clear on inspection for both $bx-c-1 > 0$ and $bx-c-1 < 0$.  Next, after suitably rewriting~(\ref{eq:2x}), we have
\[
x=\frac{b-cq}{(1-q)(1-r_1)}.
\]
Applying inequality~\eqref{eq:10} here, we obtain
\[
 x < \frac{2(b-cq)(bx-c-1)}{(1-q)b(x+c-\sqrt{(x+c)^2-4(bx-c-1)})}.
\]
Squaring and simplifying we get the inequality
\[
(bx-c-1)(x^2q(1-q)b(c-b)-xb(b-cq)(c-b)-(c+1)(b-cq)^2)<0
\]
for $bx-c-1>0$ and the inequality
\[
(bx-c-1)(x^2q(1-q)b(c-b)-xb(b-cq)(c-b)-(c+1)(b-cq)^2)>0
\]
for $bx-c-1<0$.  Hence both situations reduce to the single inequality
\[
x^2q(1-q)b(c-b)-xb(b-cq)(c-b)-(c+1)(b-cq)^2<0,
\]
which on plugging $q=1-r$ and $b=c-a$ becomes
\[
x^2r(1-r)a(c-a)-xa(c-a)(cr-a)-(c+1)(cr-a)^2 < 0.
\]
Since the coefficient at $x^2$ is clearly positive $x$ must lie
between the roots, in particular it should be smaller than the
bigger root, which is the upper bound in~\eqref{eq:xboundspos}.

\emph{The rightmost bound.} Verifying the rightmost inequality is
an exercise in high-school algebra.
\end{proof}

\begin{theorem}[Negative $\kappa$]\label{thm.negkappa}
  \label{thm:negkappa}
  Let $\kappa(r)$ be the solution to~(\ref{eq:61}), $c > a > 0$, and $r$ lie in $(0, a/c)$. Then, we have the following bounds:
  \begin{equation}
    \label{eq:xboundsnegative}
    \frac{rc-a}{r(1-r)}\left(1+\frac{1-r}{c-a}\right) < \frac{rc-a}{2r(1-r)}\left(1+\sqrt{1+\frac{4(c+1)r(1-r)}{a(c-a)}}\right) < \kappa(r) < \frac{rc-a}{r(1-r)}\left(1+\frac{r}{a}\right)
  \end{equation}
\end{theorem}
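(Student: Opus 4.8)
The plan is to reduce the negative-$\kappa$ case to the positive-$\kappa$ case already settled in Theorem~\ref{thm.poskappa}, using the Kummer transformation to convert the equation $g(a,c;\kappa)=r$ with $r\in(0,a/c)$ into an equation of exactly the same form whose solution is positive.

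First I would establish the reflection identity for the Kummer ratio. From Kummer's transformation $M(a,c,x)=e^{x}M(c-a,c,-x)$, differentiating in $x$ gives $M'(a,c,x)=e^{x}\bigl(M(c-a,c,-x)-M'(c-a,c,-x)\bigr)$, and dividing by $M(a,c,x)=e^{x}M(c-a,c,-x)$ yields
\[
g(a,c;x)=1-g(c-a,c;-x).
\]
Hence, if $g(a,c;\kappa(r))=r$ then $g(c-a,c;-\kappa(r))=1-r$. Setting $a'=c-a$ and $c'=c$ (so that $c'>a'>0$), Theorem~\ref{cor:exist} shows that $-\kappa(r)$ is the unique solution of $g(a',c';\,\cdot\,)=1-r$; since $a'/c'=1-a/c$, the hypothesis $r\in(0,a/c)$ is exactly $1-r\in(a'/c',1)$, so $-\kappa(r)>0$ and Theorem~\ref{thm.poskappa} applies with parameters $(a',c')$ at the value $1-r$.

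Next I would push the three bounds through the substitution. Let $L^{*},B^{*},U^{*}$ denote the expressions \eqref{eq:lower}, \eqref{eq:mid}, \eqref{eq:upper} with $(a,c,r)$ replaced by $(a',c',1-r)=(c-a,c,1-r)$. The elementary identities $r'c'-a'=(1-r)c-(c-a)=-(rc-a)$, $1-r'=r$, and $c'-a'=a$ give at once
\[
L^{*}=-U(r),\qquad B^{*}=-B(r),\qquad U^{*}=-L(r),
\]
and on $(0,a/c)$ each of $L(r),B(r),U(r)$ is negative because $rc-a<0$. Applying Theorem~\ref{thm.poskappa} to the positive solution $-\kappa(r)$ gives $L^{*}<-\kappa(r)<B^{*}<U^{*}$, i.e. $-U(r)<-\kappa(r)<-B(r)<-L(r)$; multiplying through by $-1$ reverses the chain and produces exactly $L(r)<B(r)<\kappa(r)<U(r)$, which is \eqref{eq:xboundsnegative}.

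I do not expect a genuine analytic obstacle here, since all the monotonicity and log-convexity input was already assembled for Theorem~\ref{thm.poskappa}. The one point demanding care is the sign bookkeeping under the reflection $r\mapsto 1-r$: it interchanges the roles of the lower and upper bounds (the negative-$\kappa$ lower bound $L$ arises from the positive-$\kappa$ \emph{upper} bound $U$, and conversely) and reverses the orientation of the whole inequality chain, which is why $B$ moves from the interior-left position to the interior-right position relative to $\kappa(r)$.
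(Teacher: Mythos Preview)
Your argument is correct, and it is a genuinely different and substantially shorter route than the one in the paper. The paper does \emph{not} reduce Theorem~\ref{thm:negkappa} to Theorem~\ref{thm:poskappa} by the reflection $g(a,c;x)=1-g(c-a,c;-x)$; instead it reruns the whole machinery of the positive case with signs tracked by hand: it rederives the identity $x=(cr-a)/\bigl(r(1-r_1)\bigr)$, uses $r_1>r$ to get a preliminary quadratic localisation of $r$, bootstraps that into a bound on $1-r_1$, and then solves the resulting quadratic in $x$ to obtain $U(r)$; the lower bounds $L(r)$ and $B(r)$ are obtained by repeating the corresponding positive-$\kappa$ computations, invoking Lemma~\ref{lemm:lcvx2} (the negative-$x$ log-convexity statement) directly at the level of the inequality $h(x)<0$. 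Your approach collapses all of this to three lines of algebra once the reflection identity is in hand, and the Kummer transformation is the only analytic input needed beyond Theorem~\ref{thm:poskappa} itself. The paper's approach has the minor advantage of being self-contained (it does not presuppose the full strength of the positive-$\kappa$ chain), but yours makes the duality $L\leftrightarrow U$, $B\leftrightarrow B$ under $(a,r)\mapsto(c-a,1-r)$ completely transparent and explains structurally why $B$ crosses from one side of $\kappa(r)$ to the other at $r=a/c$.
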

\begin{proof}
  \emph{Upper bound:} To simplify notation we write as before $x=\kappa(r)$. 
Recall that $r_1=g(a+1,c+1;x)$; according to (\ref{eq:rr1}), the
value $r_1>r$; so in view of $cr-a<0$ and~\eqref{eq:2x}, we have
the inequality
\begin{equation*}
  x<\frac{cr-a}{r(1-r)},
\end{equation*}
which is equivalent to (noting that $x<0$)
\begin{equation*}
  r^2+\left(\frac{c}{x}-1\right)r-\frac{a}{x}<0.
\end{equation*}
Thus, $r$ must lie between the roots of the quadratic
\begin{equation*}
  r^2+\left(\frac{c}{x}-1\right)r-\frac{a}{x}=0.
\end{equation*}
Straightforward analysis shows that the discriminant is positive for all real $x$ if $c>a>0$, and we have two distinct real roots so that
\begin{equation}\label{eq:r-quadratic}
\frac{1}{2}-\frac{c}{2x}-\frac{1}{2}\sqrt{1+\frac{2(2a-c)}{x}+\frac{c^2}{x^2}}<r<\frac{1}{2}-\frac{c}{2x}+\frac{1}{2}\sqrt{1+\frac{2(2a-c)}{x}+\frac{c^2}{x^2}}.
\end{equation}
We will use the lower bound above written here for $r_1=g(a+1,c+1;x)$
\begin{equation*}
0<\frac{1}{2}-\frac{c+1}{2x}-\frac{1}{2}\sqrt{1+\frac{2(2a-c+1)}{x}+\frac{(c+1)^2}{x^2}}<r_1.
\end{equation*}
Applying this lower bound for $r_1$ to~\eqref{eq:2x} and dividing by $2x<0$, we obtain the bound
\[
x<\frac{2(cr-a)}{r(1+(c+1)/x+\sqrt{1+2(2a-c+1)/x+(c+1)^2/x^2})}.
\]
By high school algebra we immediately conclude that the denominator is positive, so that
\[
(-x)\sqrt{1+2(2a-c+1)/x+(c+1)^2/x^2}<(x+1-c)+2a/r
\]
Squaring and rearranging we obtain the desired upper bound in (\ref{eq:xboundsnegative}).

\emph{Lower bounds:} First we prove the leftmost bound in (\ref{eq:xboundsnegative}) (i.e., we show that it less than
$x$). Since $r \in (0,a/c)$ we have $cr-a<0$; so, following the line of proof of Theorem~\ref{thm:poskappa} we get
\[
\frac{(c-a-1)r+1}{c-a+1-r}>r_1
\]
which leads to $h(x)<0$ where $h$ is as defined in the course of proof of Theorem~\ref{thm:poskappa}.  However, since $x<0$ for $r\in(0,a/c)$ we must again show that (where $c' = c + 1$)
\begin{equation*}
\frac{(a+1)M(a+2;c'+2;x)}{(c'+1)M(a+1;c'+1;x)}>\frac{aM(a+1;c'+1;x)}{c'M(a;c';x)},
\end{equation*}
but this time for $x<0$. Applying the Kummer transformation to the inequality above leads to
\begin{equation*}
\frac{[\Gamma(a+\mu+1)/\Gamma(c+\mu+1)]M(c-a;c+\mu+1;y)}{[\Gamma(a+\mu)/\Gamma(c+\mu)]M(c-a;c+\mu;y)}>
\frac{[\Gamma(a+1)/\Gamma(c'+1)]M(c'-a;c+1;y)}{[\Gamma(a)/\Gamma(c')]M(c'-a;c';y)},
\end{equation*}
with $\mu=1$ and $y=-x>0$. This inequality follows immediately from Theorem~\ref{lemm:lcvx2}  as we have demonstrated in the proof of the previous theorem.

Proving the second (and tighter) lower bound in (\ref{eq:xboundsnegative}) requires more work. Introduce thus the notation $b=c-a$ and $q=1-r$. The leftmost  bound in (\ref{eq:xboundsnegative}) can be rewritten as
\begin{equation*}
  \frac{b-cq}{q(1-q)}\left(1+\frac{q}{b}\right)<x,
\end{equation*}
which can be rearranged to
\[
q^2(x-c/b)-q(x+c-1)+b<0.
\]
The discriminant of the corresponding quadratic equation equals
\begin{equation*}
  D=(x+c-1)^2-4(c-a)(x-c/(c-a))=(x+c-1)^2-4(c-a)x+4c>0,
\end{equation*}
since $c-a>0$ and $x<0$. For $x-c/b<0$ this implies that $q$ must be
smaller than the smaller root or bigger than the bigger root of
$q^2(x-c/b)-q(x+c-1)+b=0$. That is,
\begin{equation*}
q<\frac{b(x+c-1)}{2(bx-c)}+\frac{b\sqrt{(x+c-1)^2-4(bx-c)}}{2(bx-c)},
\end{equation*}
or
\begin{equation*}
  \frac{b(x+c-1)}{2(bx-c)}-\frac{b\sqrt{(x+c-1)^2-4(bx-c)}}{2(bx-c)}<q.
\end{equation*}
Changing $a{\to}a+1$ and $c{\to}c+1$ here (recall that $b=c-a$), from the second inequality we obtain
\[
0<\frac{b[(x+c)-\sqrt{(x+c)^2-4(bx-c-1)}]}{2(bx-c-1)}<1-r_1,
\]
where as before $r_1=g(a+1,c+1,x)$. Positivity follows by separately considering $x+c\geq{0}$ and $x+c<0$.  Next, a simple manipulation of~\eqref{eq:2x} shows that
\begin{equation*}
  x=\frac{b-cq}{(1-q)(1-r_1)}.
\end{equation*}
Applying the above inequality concerning $r_1$ here, we obtain the bound (since $b-cq<0$)
\begin{equation*}
  x>\frac{2(b-cq)(bx-c-1)}{b(1-q)[(x+c)-\sqrt{(x+c)^2-4(bx-c-1)}]}.
\end{equation*}
Here $b-cq<0$ (since $r<a/c$), $bx-c-1<0$ (since $x<0$, $b>0$) and $(x+c)-\sqrt{(x+c)^2-4(bx-c-1)}<0$ (since the second
term is bigger than the first). So the above bound on $x$ is negative as expected.  By simple algebra and in view of the signs we have
\[
x+c-\frac{2(b-cq)(bx-c-1)}{xb(1-q)}>\sqrt{(x+c)^2-4(bx-c-1)}>0.
\]
Squaring yields
\begin{equation*}
  \begin{split}
    &x^2b^2(1-q)^2(x+c)^2 - 4(b-cq)(bx-c-1)xb(1-q)(x+c)+4(b-cq)^2(bx-c-1)^2\\
    &>((x+c)^2-4(bx-c-1))x^2b^2(1-q)^2,
  \end{split}
\end{equation*}
whereby reducing similar terms and on dividing by $-4(bx-c-1)>0$ we obtain
\begin{equation*}
  (b-cq)xb(1-q)(x+c)-(b-cq)^2(bx-c-1)-x^2b^2(1-q)^2>0.
\end{equation*}
Simplifying we get
\[
x^2q(1-q)b(c-b)-xb(b-cq)(c-b)-(c+1)(b-cq)^2<0,
\]
so that after plugging in $q=1-r$ and $b=c-a$ we have the inequality
\[
x^2r(1-r)a(c-a)-xa(c-a)(cr-a)-(c+1)(cr-a)^2<0.
\]
Thus, $x$ must be between the roots of this quadratic. The fact that it is bigger than the smallest root is precisely the inequality between $x$ and the middle term in (\ref{eq:xboundsnegative}).

Finally, comparing the two lower bounds in (\ref{eq:xboundsnegative}) leads to the inequality
\[
\frac{1}{2}+\frac{1-r}{c-a}>\frac{1}{2}\sqrt{1+\frac{4(1-r)r(c+1)}{(c-a)a}},
\]
which upon squaring and simplifying reduces to $r<a/c$, the hypothesis of the theorem.~~$\square$
\end{proof}

In the next theorem  we find the asymptotic expansions of the
solution to $g(a,c;x)=r$ around $r=0$, $r=a/c$ and $r=1$.
\begin{theorem}\label{th:xasymp}
  Let $c > a > 0$, $r \in (0,1)$; let $x(r)$ be the solution to $g(a, c; x)=r$. Then,
  \begin{align}
    \label{eq:xasymp.0}
    x(r) &= -\frac{a}{r}+(c-a-1)+\frac{(c-a-1)(1+a)}{a}r+O(r^2),\quad r\to{0},\\
    \label{eq:xasymp.ac}
    x(r) &= \Bigl(r-\frac{a}{c}\Bigr) \left  \{ \frac{c^2(1+c)}{a(c-a)} + \frac{c^3(1+c)^2(2a-c)}{a^2(c-a)^2(c+2)} \Bigl(r-\frac{a}{c}\Bigr) + O\Bigl(\Bigl(r-\frac{a}{c}\Bigr)^2\Bigr) \right\},\quad r\to{\frac{a}{c}}\\
\label{eq:xasymp.1}
    x(r) &= \frac{c-a}{1-r} + 1-a + \frac{(a-1)(a-c-1)}{c-a}(1-r) + O((1-r)^2),\quad r\to{1}.
  \end{align}
\end{theorem}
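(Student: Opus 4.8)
The plan is to turn the implicit equation $g(a,c;x(r))=r$ into an explicit first‑order ODE for $x(r)$ that contains no Kummer functions at all, and then read off the three local expansions by plugging in the obvious ansatz at each point and matching coefficients. The device that makes this clean is the Riccati identity~\eqref{eq:41}, which expresses $g'$ through $g$ itself.

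First I would set up the ODE. By Theorem~\ref{cor:exist} (equivalently Theorem~\ref{thm:monotone}) the map $x\mapsto g(a,c;x)$ is a strictly increasing bijection of $\reals$ onto $(0,1)$, so $x(r)$ is well defined and differentiable on $(0,1)$ with $x(0^+)=-\infty$, $x(a/c)=0$, $x(1^-)=+\infty$, and $x'(r)=1/g'(a,c;x(r))$. Substituting $g(a,c;x(r))=r$ into~\eqref{eq:41} gives $g'(a,c;x(r))=(1-c/x(r))r+a/x(r)-r^2=\bigl(x(r)\,r(1-r)-(cr-a)\bigr)/x(r)$, hence
\[
 \frac{dx}{dr}=\frac{x}{x\,r(1-r)-(cr-a)},\qquad r\in(0,1).
\]
Near $r=a/c$ the right side is a $0/0$ form, but since $g$ is an everywhere‑convergent power series and $g'(a,c;0)=a(c-a)/\bigl(c^2(c+1)\bigr)\neq 0$, the analytic inverse function theorem shows $x(r)$ is analytic at $r=a/c$ with $x'(a/c)=c^2(c+1)/\bigl(a(c-a)\bigr)$ — already the leading coefficient of~\eqref{eq:xasymp.ac}.

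Next I would compute the three expansions from this ODE. At $r=a/c$, write $s=r-a/c$, expand $r(1-r)$ and $cr-a=cs$ about $s=0$, insert $x(r)=\alpha s+\beta s^2+\gamma s^3+\cdots$, and equate powers of $s$; the recursion yields $\alpha=c^2(1+c)/(a(c-a))$, then $\beta=c^3(1+c)^2(2a-c)/\bigl(a^2(c-a)^2(c+2)\bigr)$, giving~\eqref{eq:xasymp.ac}. At $r=1$, with $t=1-r$ the ODE reads $dx/dt=x/\bigl((c-a)-ct-xt+xt^2\bigr)$; the balance forces $x\sim(c-a)/t$, so inserting $x(r)=(c-a)/t+A_0+A_1t+\cdots$ and matching powers of $t$ gives $A_0=1-a$ and $A_1=(1-a)(c+1-a)/(c-a)=(a-1)(a-c-1)/(c-a)$, i.e.~\eqref{eq:xasymp.1}. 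Symmetrically, at $r=0$ the balance forces $x\sim-a/r$, so inserting $x(r)=-a/r+B_0+B_1r+\cdots$ and matching powers of $r$ gives $B_0=c-a-1$ and $B_1=(c-a-1)(1+a)/a$, i.e.~\eqref{eq:xasymp.0}. (The $r\to0$ and $r\to1$ cases can alternatively be done directly from the large‑$|x|$ asymptotics~\eqref{eq:kummer.asym} together with the Kummer transformation $M(a,c,x)=e^xM(c-a,c,-x)$, which give $g(a,c;x)=1-(c-a)/x+O(1/x^2)$ as $x\to+\infty$ and $g(a,c;x)=-a/x+O(1/x^2)$ as $x\to-\infty$, followed by reversion of the asymptotic series; the ODE route is what handles $r=a/c$ uniformly.)

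Finally I would justify that these formal series are genuine asymptotic expansions of the monotone branch. At $r=a/c$ this is immediate from analyticity. At $r=0$ and $r=1$ it follows because $x(r)$ is the unique solution of the ODE with the prescribed one‑sided limit $\pm\infty$, and a formal series solution with the correct leading singular term must coincide with its asymptotic expansion (equivalently, substitute $v=1/x$ and invoke legitimacy of reversion of asymptotic power series). The hard part will not be conceptual but computational: pushing the coefficient matching at $r=a/c$ far enough to produce the cumbersome second‑order coefficient $c^3(1+c)^2(2a-c)/\bigl(a^2(c-a)^2(c+2)\bigr)$ and keeping the $O(\cdot)$ bookkeeping honest; everything else is a short consequence of the ODE reduction.
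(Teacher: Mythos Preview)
Your proposal is correct and takes a genuinely different route from the paper. The paper first writes down the Taylor expansion of $g$ at $x=0$ (by division of power series) and the asymptotic expansions of $g$ at $x=\pm\infty$ (via the Kummer asymptotic~\eqref{eq:kummer.asym} and the transformation $M(a,c,x)=e^xM(c-a,c,-x)$), and then obtains $x(r)$ by Lagrange inversion~/ reversion of these series, treating the three points $r=0$, $r=a/c$, $r=1$ separately. You instead exploit the Riccati identity~\eqref{eq:41}, already derived in the paper for the Newton step, to get the single ODE $x'(r)=x/\bigl(xr(1-r)-(cr-a)\bigr)$, and then read off all three expansions by ansatz-and-match. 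Your route is computationally cleaner and more unified---no Kummer asymptotics are needed to generate the coefficients, and the cumbersome second-order term at $r=a/c$ drops out of a two-line recursion rather than from Lagrange's formula. The paper's route, on the other hand, carries its own justification at $r=0,1$ for free, since reversion of an asymptotic power series is a standard legitimate operation; in your approach the corresponding step (``a formal series solution of the ODE with the correct leading singular term must coincide with the asymptotic expansion'') is the one place that needs a sentence of care, and you rightly note that substituting $v=1/x$ and appealing to reversion closes it. Either way the arithmetic checks: your $\alpha,\beta$, $A_0,A_1$, $B_0,B_1$ all match the stated coefficients.
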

\begin{proof}  The first step
is the standard division of power series (see, for
instance, \cite[Theorem 8.8]{Gonzalez} or
\cite[Chapter~2.3]{HurwKur}) which yields in a neighbourhood of
$x=0$:
\begin{equation}\label{eq:f-0}
g(x)=\frac{aM(a+1,c+1;x)}{cM(a,c;x)}=\frac{a}{c}+\frac{a(c-a)}{c^2(1+c)}x+\frac{a(a-c)(2a-c)}{c^3(1+c)(2+c)}x^2+O(x^3).
\end{equation}
Next, the division of the asymptotic formulas
(\ref{eq:kummer.asym}) used with appropriate parameters gives in
the neighbourhood of $x=\infty$:
\begin{equation}\label{eq:f-infty}
g(x)=\frac{aM(a+1,c+1;x)}{cM(a,c;x)}=1+(a-c)\frac{1}{x}+(a-c)(1-a)\frac{1}{x^2}+(1-a)(c-a)(2a-c-2)\frac{1}{x^3}+O(x^{-4}).
\end{equation}
The inversion of a power series in a neighbourhood of a finite
point $x_0$ is achieved via the following formula of Lagrange
\cite[Chapter~7, Theorem~1*]{HurwKur}:
\[
x(r)=x_0+\sum\limits_{n=1}^{\infty}\lim\limits_{x\to{x_0}}\left[\frac{d^{n-1}}{dx^{n-1}}\left(\frac{x-x_0}{g(x)-r_0}\right)^n\right]
\frac{(r-r_0)^n}{n!},
\]
where $r_0=g(x_0)$ and $g'(x_0)\ne{0}$.  Introducing the notation
\[
g(x)-r_0=g_1(x-x_0)+g_2(x-x_0)^2+g_3(x-x_0)^3+\cdots
\]
for the Taylor coefficients of $g$, we obtain (see
\cite[(8.14.11)]{Gonzalez}):
\begin{equation}\label{eq:inversion}
x(r)-x_0=\frac{1}{g_1}(r-r_0)-\frac{g_2}{g_1^3}(r-r_0)^2+\frac{2g_2^2-g_1g_3}{g_1^5}(r-r_0)^3+O((r-r_0)^4).
\end{equation}
In our case we have expansions in the neighbourhoods of $x_0=0$
and $x_0=\infty$.  For the case $x_0=0$ formula
(\ref{eq:inversion}) immediately reveals:
\begin{equation}\label{eq:xrac}
x(r)=(r-a/c)\left\{\frac{c^2(1+c)}{a(c-a)}+\frac{c^3(1+c)^2(2a-c)}{a^2(c-a)^2(c+2)}(r-a/c)+O((r-a/c)^2)\right\}
\end{equation}
which is precisely formula (\ref{eq:xasymp.ac}). For the point at
infinity the Lagrange formula does not have the form given above.
To compute the correct expression introduce the new variable
$y=1/x$ and rewrite the expansion (\ref{eq:f-infty}) in the form:
\[
r=\tilde{f}(y)=f(1/y)=q_0+q_1y+q_2y^2+\cdots
\]
Then according  to (\ref{eq:inversion}):
\[
y=\frac{1}{q_1}(r-q_0)-\frac{q_2}{q_1^3}(r-q_0)^2+\frac{2q_2^2-q_1q_3}{q_1^5}(r-q_0)^3+O((r-q_0)^4).
\]
Again applying standard division of power series (see, for
instance \cite[Theorem 8.8]{Gonzalez} or
\cite[Chapter~2.3]{HurwKur}) we get:
\[
x(r)=\frac{1}{y(r)}=\frac{q_1}{r-q_0}+\frac{q_2}{q_1}+\frac{1}{q_1}\left(\frac{q_3}{q_1}-\frac{q_2^2}{q_1^2}\right)(r-q_0)
+O((r-q_0)^2)
\]
Substituting here the values of $q_i$ from (\ref{eq:f-infty}), we
obtain the expansion:
\begin{equation}\label{eq:xr1}
x(r)=\frac{1}{1-r}\left\{(c-a)+(1-a)(1-r)+\frac{(a-1)(a-c-1)}{c-a}(1-r)^2+O((1-r)^3)\right\}.
\end{equation}
This proves formula (\ref{eq:xasymp.1}).

From formula (\ref{eq:fmuneg.asymp}) we immediately derive
$$
g(a;c;x)=\frac{a}{-x}+\frac{a(1+a-c)}{(-x)^2}+\frac{a(1+a-c)(2+2a-c)}{(-x)^3}+O(1/(-x)^4)
$$
Introducing $y=-1/x$ we rewrite this as
$$
r=ay+a(1+a-c)y^2+a(1+a-c)(2+2a-c)y^3+O(y^4)=q_0+q_1y+q_2y^2+q_3y^3+O(y^4)
$$
Then according  to (\ref{eq:inversion}):
\[
y=\frac{1}{q_1}(r-q_0)-\frac{q_2}{q_1^3}(r-q_0)^2+\frac{2q_2^2-q_1q_3}{q_1^5}(r-q_0)^3+O((r-q_0)^4).
\]
Again applying standard division of power series (see, for
instance \cite[Theorem 8.8]{Gonzalez} or
\cite[Chapter~2.3]{HurwKur}) we get:
\[
x(r)=\frac{-1}{y(r)}=-\frac{q_1}{r-q_0}-\frac{q_2}{q_1}-\frac{1}{q_1}\left(\frac{q_3}{q_1}-\frac{q_2^2}{q_1^2}\right)(r-q_0)
+O((r-q_0)^2)
\]
or
$$
x(r)=\frac{-1}{y(r)}=-\frac{a}{r}-\frac{a(1+a-c)}{a}-\frac{1}{a}\left(\frac{a(1+a-c)(2+2a-c)}{a}-\frac{a^2(1+a-c)^2}{a^2}\right)r+O(r^2)
$$
$$
=-\frac{a}{r}+(c-a-1)+\frac{(c-a-1)(1+a)}{a}r+O(r^2),
$$
which completes the demonstration of (\ref{eq:xasymp.0}).
\end{proof}
\end{document}